\newif\ifpublic
\newcommand*\samethanks[1][\value{footnote}]{\footnotemark[#1]}
\renewcommand{\eqref}[1]{\hyperref[#1]{(\ref*{#1})}}
\theoremstyle{plain}
\newtheorem{theorem}{Theorem}[section]
\newtheorem{lemma}[theorem]{Lemma}
\newtheorem{proposition}[theorem]{Proposition}
\newtheorem{definition}[theorem]{Definition}
\newtheorem{claim}[theorem]{Claim}
\theoremstyle{definition}
\renewcommand{\epsilon}{\varepsilon}
\newcommand{\eps}{\epsilon}
\newcommand{\zo}{\{0,1\}}
\newcommand{\OR}{\mathrm{OR}}
\newcommand{\AC}{\mbox{\rm AC}}
\newcommand{\reals}{\mathbb{R}}
\newcommand{\disjunion}{\mathbin{\dot{\cup}}}
\newcommand{\union}{\mathbin{\cup}}
\newcommand{\pdeg}{\operatorname{P-deg}}
\newcommand{\hcpdeg}{\operatorname{hcP-deg}}
\newcommand{\supp}{\operatorname{supp}}
\newcommand{\distP}{\mathbf{P}}
\newcommand{\distL}{\mathbf{L}}
\newcommand{\calD}{{\mathcal{D}}}
\newcommand{\calL}{{\mathcal{L}}}
\newcommand{\prob}[2]{\Pr_{#1}\left[ #2 \right]}
\newcommand{\bra}[1]{\{#1\}}
\newcommand{\cbra}[1]{\left(#1\right)}
\newcommand{\bigo}[1]{O\left(#1\right)}
\newcommand{\half}{\nicefrac{1}{2}}
\newcommand{\E}[2]{\mathop{\mathbb{E}}_{#1}\left[ #2 \right]}
\renewcommand{\calL}{\mathcal{L}}
\newcommand{\calR}{\mathcal{R}}
\newcommand{\lbdetail}{\frac{\log\cbra{\nicefrac1{8\eps}}\cdot\cbra{\log
      n - \llepsinv}}{2C \log^2\cbra{\nicefrac1{C^2}\cdot
      \log^4\cbra{\nicefrac1{8\eps}}\cdot \cbra{\log n -
        \llepsinv}^3}}}
\newcommand{\llepsinv}{\log\log(\nicefrac1\eps)}
\newcommand{\lepsinv}{\log(\nicefrac1\eps)}
\newcommand{\Omegatilde}[1]{\widetilde\Omega \cbra{#1}}
\newcommand{\bigomega}[1]{\Omega\cbra{#1}}
\newcommand{\quarter}{\nicefrac{1}{4}}
\newcommand{\zs}{(0,*)}
\newcommand{\ans}{\log \binom{n}{\leq \lepsinv}}
\newcommand{\lb}{\Omega \cbra{\frac{\ans}{\log^2 \cbra{\ans}}}}
\newcommand{\lbinline}{\Omega \cbra{\ans / \log^2 \cbra{\ans}}}
\title{On the Probabilistic Degree of OR over the Reals\thanks{A
    preliminary version of this paper appeared in {\em Proc. $38$th IARCS Annual Conf.\ on Foundations of Software Tech.\ and
  Theoretical Comp.\ Science (FSTTCS)} 2018~\cite{BhandariHMS2018}.}}
\author{
  Siddharth Bhandari\thanks{Tata Institute of Fundamental Research,
    INDIA. email: {\tt
      \{siddharth.bhandari,prahladh,tulasi.molli\}@tifr.res.in}. Research
  of the first and second author supported in part by the Google PhD Fellowship and Swarnajayanti Fellowship respectively.}
  \and Prahladh Harsha\samethanks 
  \and Tulasimohan Molli\samethanks 
  \and Srikanth Srinivasan\thanks{Department of Mathematics, IIT Bombay, INDIA. email: {\tt srikanth@math.iitb.ac.in}. Supported in part by SERB Matrics grant MTR/2017/000958.}}
\date{}
\begin{document}
\maketitle
\begin{abstract}
We study the probabilistic degree over $\reals$ of the $\OR$ function
on $n$ variables. For $\eps \in (0,1/3)$, the $\eps$-error probabilistic degree of 
any Boolean function $f:\zo^n\to \zo$ over $\reals$ is the smallest non-negative
integer $d$ such that the following holds: there exists
a distribution $\distP$ of polynomials $P(x_1,\ldots,x_n)  \in \reals[x_1,\ldots,x_n]$ of degree at most $d$ such that for all $\bar{x}
\in \zo^n$, we have $\Pr_{P \sim \distP}[P(\bar{x}) = f(\bar{x}) ] \geq 1- \eps$.  It is
known from the works of Tarui ({\em Theoret. Comput. Sci.} 1993) and
Beigel, Reingold, and Spielman ({\em Proc.\ $6$th CCC} 1991), that the
$\eps$-error probabilistic
degree of the $\OR$ function is at most $O(\log n \cdot \lepsinv)$.  Our first observation
is that this can be improved to $\bigo{\ans}$ 
which is better for small values of $\eps$. 

In all known constructions of probabilistic polynomials for the $\OR$
function (including the above improvement), the polynomials $P$ in the
support of the distribution $\distP$ have the
following special structure:
\[ 
P(x_1,\dots,x_n) = 1 - \prod_{i \in [t]} \left(1- L_i(x_1,\dots,x_n)\right),
\]
where each $L_i(x_1,\dots, x_n)$ is a linear form in the variables
$x_1,\ldots,x_n$, i.e., the polynomial $1-P(\bar{x})$ is a product of affine
forms. We show that the $\eps$-error probabilistic degree of
$\OR$ when restricted to polynomials of the above form 
is $\Omega\left( \ans/\log^2 \left( \ans \right)\right)$, thus matching the above upper bound 
(up to poly-logarithmic factors).
\end{abstract}

\section{Introduction}\label{sec:intro}
Low-degree polynomial approximations of Boolean functions were introduced
by Razborov in his celebrated work~\cite{Razborov1987} on proving
lower bounds for the class of Boolean functions computed by low-depth
circuits. We begin by recalling this notion of approximation over $\reals$.

\begin{definition}[probabilistic degree]\label[definition]{def:pdeg}
Given a Boolean function $f:\zo^n \to \zo$ and $\eps \in (0,1/3)$, an \emph{$\eps$-error
  probabilistic polynomial} over $\reals$\footnote{Similar
  notions over other fields are also studied. Unless otherwise
  specified, we will be considering probabilistic polynomials over
  the reals in this paper.} for $f$ is a
distribution $\distP$ of polynomials $P(x_1,\ldots,x_n)\in \reals[x_1,\ldots,x_n]$ such that
for any $\bar{x}\in \{0,1\}^n$, we have $\Pr_{P\sim \distP}[P(\bar{x}) \neq
  f(\bar{x})] \leq \eps$. The \emph{$\eps$-error probabilistic degree of
    $f$}, denoted by $\pdeg_\eps(f)$, is the smallest non-negative
  integer $d$ such that the following holds: there exists an $\eps$-error
  probabilistic polynomial $\distP$ over $\reals$ such that $\distP$ is
  entirely supported on polynomials of degree at most $d$. 
\end{definition}

Classical results in polynomial approximation of Boolean
functions~\cite{TodaO1992,Tarui1993,BeigelRS1991} show that the $\OR$
function over $n$ variables, denoted by $\OR_n$, has $\eps$-error
probabilistic degree at most $\bigo{\log n \cdot \lepsinv}$. This
basic construction for the $\OR$ function is then recursively used to
show that any function computed by an $\AC^0$ circuit of size $s$ and
depth $d$ has $\eps$-error probabilistic degree at most
$(\log s)^{\bigo{d}}\cdot \lepsinv$ (see work by Harsha and
Srinivasan~\cite{HarshaS2019-ac0} for recent improvements). These
results can then be used to prove~\cite{Razborov1987,Smolensky1987} a
(slightly weaker) version of H{\aa}stad's celebrated
theorem~\cite{Hastad1989} that parity does not have
subexponential-sized $\AC^0$ circuits. These results were employed
more recently by Braverman~\cite{Braverman2010} to prove that
polylog-wise independence fools $\AC^0$ functions.

Despite the fact that probabilistic polynomials for the $\OR$ function
are such a basic
primitive, it is surprising that we do not yet have a complete
understanding of $\pdeg_\eps(\OR_n)$. As mentioned above, it is known from
the works of Beigel, Reingold and Spielman~\cite{BeigelRS1991} and
Tarui~\cite{Tarui1993} that $\pdeg_\eps(\OR_n) = \bigo{\log n \cdot \lepsinv}$.  
The Schwartz-Zippel lemma implies that a dependence of $\bigomega{\lepsinv}$ is necessary in the above bound.
However, until recently, it wasn't clear whether any dependence on $n$ is necessary
in $\pdeg_\eps(\OR_n)$ over the reals\footnote{For finite fields of constant size,
Razborov~\cite{Razborov1987} showed that the $\varepsilon$-error
probabilistic degree of $\OR_n$ is $\bigo{\lepsinv}$,
independent of $n$, the number of the input bits.}. In recent papers of
Meka, Nguyen and Vu~\cite{MekaNV2016} and Harsha and Srinivasan~\cite{HarshaS2019-ac0}, 
it was shown using anti-concentration of low-degree polynomials that the 
$\pdeg_{\quarter}(\OR_n) = \widetilde{\Omega}(\sqrt{\log n})$. The main objective of this paper
is to obtain a better understanding of the $\eps$-error probabilistic
degree of $\OR_n$, $\pdeg_\eps(\OR_n)$. In addition to being
interesting in its own right, this question has bearing on the amount
of independence needed to fool $\AC^0$ circuits. Recent improvements
due to Tal~\cite{Tal2017} and Harsha and Srinivasan~\cite{HarshaS2019-ac0} of Braverman's result demonstrate that
$(\log s)^{2.5d +\bigo{1}}\cdot \lepsinv$-wise independence fools functions
computed by $\AC^0$ circuits of size $s$ and depth $d$ up to error $\varepsilon$. An improvement
of the upper bound on $\pdeg_\eps(\OR_n)$ to $\bigo{\log n +
\lepsinv}$ could potentially strengthen this result to $(\log s)^{d +\bigo{1}}\cdot
\lepsinv$, nearly matching the lower bound of $(\log s)^{d-1}\cdot \lepsinv$ due to Mansour~\cite{LubyV1996}. 

The above discussion demonstrates that the current bounds on $\pdeg_{\varepsilon}(\OR_n)$ 
fall short of being tight in two aspects: one, the dependence on $n$ in the lower bound is
$\Omegatilde{\sqrt{\log n}}$ while in the upper bound it is $\bigo{\log
n}$ and two, the joint dependence on $\eps$ and $n$ in the upper bound
is multiplicative, i.e., $\bigo{\log n \cdot \lepsinv}$ while the
current lower bounds can only show an additive
$\Omegatilde{\sqrt{\log n} + \lepsinv}$ bound.

Which of these bounds is tight? A casual observer might suspect that the upper bound is, 
given the relatively neat expression. However, a closer look tells us  that it cannot be, 
at least when $\varepsilon$ is quite small. For example, setting $\varepsilon = 1/2^{\Omega(n)}$, 
the upper bound yields a degree of $O(n\log n),$ but 
it is a standard fact 
that any Boolean function 
on $n$ variables can be represented exactly (i.e., with no error) as a polynomial of degree $n$. 
Hence the upper bound is not tight in this regime. 

Our first observation is that the upper bound of Tarui and Beigel et al.~\cite{BeigelRS1991} can indeed 
be slightly improved to $\bigo{\ans}$\footnote{Here, $\binom{N}{\leq \alpha}$
denotes  $\sum_{0 \leq i \leq \alpha} \binom{N}{i}$. We use the
convention that $\binom{N}{i} = 0$ if $i > N$.}; note that this is asymptotically better than  
$\bigo{\log n \cdot \lepsinv}$ for very small $\eps.$ This interpolates smoothly between the construction of 
Tarui~\cite{Tarui1993} and Beigel et al.~\cite{BeigelRS1991} and the exact representation of degree $n$ mentioned above. 
(See \cref{sec:upper} for details on this upper-bound construction.) 

Given this observation, one might hope
to prove a matching lower bound on the $\eps$-error
probabilistic degree of $\OR_n$. We can indeed show such a bound (up to polylogarithmic factors) if we suitably
restrict the class of polynomials being considered. While restricted,
this subclass of polynomials nevertheless includes all polynomials
that were used in previous upper bound constructions, including our
own. Moreover, this result generalizes a result of Alon, Bar-Noy,
Linial and Peleg~\cite{AlonBLP1991}, who prove such a result for a
further restricted class of polynomials (mentioned at the end of this
section) and for $\lepsinv = O(\log n)$\footnote{The result
  of~\cite{AlonBLP1991} is stated in a slightly different language,
  but is essentially equivalent to a probabilistic degree lower bound
  for $\OR_n$ for a suitable class of polynomials.}.  A careful
reworking of their analysis shows that their lower bound extends to
even smaller $\epsilon$ to show a lower bound of $\Omega(\ans)$ for this smaller class of polynomials. 

To state our result, we first need to describe the class of
polynomials for which our bounds hold. To this end, we note that all
known upper-bound constructions of probabilistic polynomials for the
$\OR$ function have the following structure:
\[
P(x_1,\dots,x_n) = 1 - \prod_{i \in [t]} \left(1- L_i(x_1,\dots,x_n)\right),
\]
where each $L_i(x_1,\dots, x_n) = a_{i1} x_1 + a_{i2} x_2 + \cdots +
a_{in}x_n$ is a linear form in the variables $x_1,\ldots,x_n$ (here, $a_{ij}
\in \reals$). 
This motivates
the following definition.

\begin{definition}[hyperplane covering polynomials]
\label{def:hcpdeg}
A polynomial $P \in \reals[x_1,\ldots,x_n]$ is said to be a hyperplane
covering polynomial of degree $t$ if there exist $t$ linear forms $L_1, \ldots, L_t$
over the reals such that 
\[
P(x_1,\ldots, x_n ) = 1 - \prod_{i \in
  [t]}\cbra{1-L_i(x_1,\ldots,x_n)} .
\]
For $\eps \in (0,1/2)$, the \emph{$\eps$-error hyperplane covering probabilistic degree of
    $f$}, denoted by $\hcpdeg_\eps(f)$, is the smallest non-negative
  integer $d$ such that the following holds: there exists an $\eps$-error
  probabilistic polynomial $\distP$ over $\reals$ such that $\distP$ is
  supported on hyperplane covering polynomials of degree at most $d$. 
\end{definition}
We call these polynomials hyperplane covering polynomials as these
polynomials have the property that the set of points in the Boolean hypercube where 
the polynomial evaluates to 1 (i.e, the set $\{ \bar{x} \in \zo^n \mid P(\bar{x}) = 1\}$) 
is a union of hyperplanes not passing through the origin. 
We further note that all these
polynomials satisfy the property that $P(\bar{0}) = 0$. Since hyperplane covering 
polynomials are a subclass of probabilistic polynomials,
$\hcpdeg_\eps(f) \geq \pdeg_\eps(f)$. Since all our upper-bound
constructions for the OR polynomials are hyperplane covering
polynomials, we not only have that $\pdeg_\eps(\OR_n) = \bigo{\ans}$ but also that $\hcpdeg_\eps(\OR_n) = \bigo{\ans}$. Our main result is the following (almost) tight bound on the 
$\eps$-error hyperplane covering
probabilistic degree of the $\OR$ function.

\begin{theorem}[hyperplane covering degree of
  $\OR_n$]\label{thm:hpdeg_lbd}
For any any positive integer $n$ and $\eps \in (0,1/3)$,
\[
\hcpdeg_\eps(\OR_n) = \lb .
\]
\end{theorem}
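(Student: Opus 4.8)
The upper bound \(\hcpdeg_\eps(\OR_n)=\bigo{\ans}\) is precisely the construction of \cref{sec:upper} (each polynomial produced there is a hyperplane covering polynomial), so the real content is the matching lower bound \(\hcpdeg_\eps(\OR_n)=\lb\). I first reformulate this geometrically. Suppose \(\distP\) is supported on hyperplane covering polynomials of degree at most \(d\), and write each \(P\) in the support as \(1-\prod_i(1-L_i^P)\) with at most \(d\) linear forms \(L_i^P\). Since \(L_i^P(\bar 0)=0\), we automatically have \(P(\bar 0)=0=\OR_n(\bar 0)\), and for \(\bar x\neq\bar 0\) we have \(P(\bar x)=1\) iff \(\bar x\) lies on one of the affine hyperplanes \(H_i^P=\{\bar y:L_i^P(\bar y)=1\}\), each of which misses the origin. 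So the hypothesis becomes: for every nonzero \(\bar x\in\zo^n\), the random union \(\bigcup_i H_i^P\) contains \(\bar x\) with probability \(\ge 1-\eps\). Fixing a probability distribution \(\calD\) supported on nonzero points of Hamming weight roughly \(\lepsinv\) and averaging, there is a single \(P\) in the support whose \(\le d\) affine hyperplanes (all avoiding the origin) cover a \((1-\eps)\)-fraction of the \(\calD\)-mass; the remaining task is to lower bound \(d\).

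\paragraph{The dependence on \(n\).}
For a hyperplane \(H=\{\sum_j a_jx_j=1\}\) and a Hamming slice \(W_w\), the fraction of \(W_w\) covered by \(H\) equals \(\Pr_S[\sum_{j\in S}a_j=1]\) for a uniformly random \(w\)-subset \(S\). The analytic engine is a Littlewood--Offord-type anti-concentration bound for such subset sums: \(\sum_{j\in S}a_j\) cannot hit the value \(1\) with substantial probability for two well-separated values of \(w\) at once (unless the multiset \(\{a_j\}\) is extremely clustered, in which case \(H\) essentially covers only a single slice), and even on one slice this point mass is at most about \(1/\sqrt w\) once the \(a_j\) are not all equal. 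This is the ingredient absent from \cite{AlonBLP1991}, which effectively treats a more restricted class of coefficients (morally, selective families). Taking \(\calD\) to be a suitable mixture of geometrically spaced slices, one deduces that no single hyperplane contributes non-trivially to more than \(\widetilde{O}(1)\) of the slices, and a Markov/double-counting argument over the slices then forces \(d=\Omegatilde{\log n}\). This already proves the theorem when \(\lepsinv=O(1)\), and since \(\hcpdeg_\eps\) is non-increasing in \(\eps\) it yields the additive \(\Omegatilde{\log n}\) contribution for all \(\eps\).

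\paragraph{The dependence on \(\eps\), and the main obstacle.}
A pure averaging argument cannot do better: a distribution \(\distP\) only constrains each point's coverage to within \(1-\eps\), and the natural fractional relaxation of the covering problem is essentially insensitive to \(\eps\) (its value is \(O(\log n)\)-ish regardless). Hence the \(\lepsinv\cdot\log(n/\lepsinv)\)-scale bound implicit in \(\ans=\log\binom n{\le\lepsinv}\) must come from using the guarantee at \emph{every} input, not merely its \(\calD\)-average. The plan is an inductive ``peeling'' with a two-level structure: partition the relevant low-weight inputs and scales into \(\Theta(\lepsinv)\) bands, each carrying a \(\Theta(\log(n/\lepsinv))\)-scale copy of the argument above; at each stage note that the hyperplanes of the chosen \(P\) covering a substantial share of the current band form a narrow cluster (again by subset-sum anti-concentration), remove that band and those hyperplanes, and recurse on the survivors. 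Accumulating the \(\Theta(\log(n/\lepsinv))\) gain over \(\Theta(\lepsinv)\) bands should give \(d=\Omegatilde{\lepsinv\log(n/\lepsinv)}=\Omegatilde{\ans}\). I expect the hard part --- and the source of the \(\log^2(\ans)\) loss in the statement --- to be the quantitative bookkeeping in this last step: the Littlewood--Offord bounds for \(\sum_{j\in S}a_j\) lose logarithmic factors, the induction must charge for hyperplanes already spent on earlier bands, the surviving input sets must stay large, and \(\calD\) must remain well spread after each deletion, all while the band parameters stay consistent with the chosen \(\eps\). Reconciling these simultaneously is the technical core of the argument.
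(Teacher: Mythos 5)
There is a genuine gap: the heart of the theorem --- the multiplicative lower bound $\Omegatilde{\lepsinv\cdot(\log n-\llepsinv)}$ --- is never actually proved in your proposal. Your ``peeling over $\Theta(\lepsinv)$ bands'' is announced as a plan, and you yourself flag the accumulation step (charging hyperplanes across bands, keeping the surviving inputs and the distribution well spread, absorbing the Littlewood--Offord losses) as the unresolved technical core; no mechanism is given that makes this bookkeeping close, so what remains established is at best the additive $\Omegatilde{\log n}$ term, which is weaker than the known $\Omegatilde{\sqrt{\log n}+\lepsinv}$-type bounds and far from $\lb$. Moreover, the premise that forces you into this inductive route --- that ``a pure averaging argument cannot do better'' than an $\eps$-insensitive fractional covering bound --- is false. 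The paper's proof \emph{is} a single-hard-distribution argument (\cref{thm: deterministic_polynomial_lb_wrt_hard_dist}): one fixes the mixture $\calD_\eps$ of $p$-biased distributions $\mu_{\nicefrac1{2^\ell}}$ for $\ell\le \log n-\llepsinv$, writes $\mu_{\nicefrac1{2^\ell}}=\mu_{\half}\circ\rho_{\nicefrac1{2^{\ell-1}}}$, and shows any single hyperplane covering polynomial with $\Pr_{x\sim\calD_\eps}[P(x)\neq\OR_n(x)]\le\eps$ has degree $\lbinline$. The $\eps$-dependence enters not through per-input guarantees but through \cref{lem:sz} (Alon--F\"uredi): after a typical restriction, the surviving linear forms split (via a potential function with weights $1/\log^2(2|\supp(L)|)$ and the partition statement \cref{prop:partition}) into a small cluster $\calL'$ and forms of support $\ge 4t^2$; if $|\calL'|<\log(\nicefrac1{8\eps})$ then $\prod_{L\in\calL'}(1-L)$ is nonzero with probability $\ge 8\eps$ under $\mu_{\half}$, and \cref{lem:lo} kills the large-support forms, so $P$ errs with probability $>\eps$ on $\calD_\eps$. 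Thus every one of the $\approx\log n-\llepsinv$ scales forces $\gtrsim\lepsinv$ forms to survive, and \cref{prop:pot_exp} converts this into the multiplicative bound on $t$ --- entirely within the averaging framework you dismissed.

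Two further points. First, even your $n$-dependence sketch rests on an unproved claim: it is not established that a single hyperplane ``contributes non-trivially to at most $\widetilde{O}(1)$'' of the geometrically spaced slices; making this precise is exactly the role played in the paper by the weight function $w(L)$ and \cref{claim:logbinomial1,claim:logbinomial2}, not by a bare Littlewood--Offord statement for subset sums. Second, your geometric reformulation and the reduction to a single polynomial covering a $(1-\eps)$-fraction of a hard distribution's mass are correct and match the paper's setup (including handling the mass at $\bar 0$, cf.\ \cref{clm:zero}), and the upper-bound half of the theorem is indeed just \cref{thm:our ub}; the missing content is precisely the quantitative argument sketched above.
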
 

It is open if this result can be extended to prove a tighter lower bound on
the $\eps$-error probabilistic degree of $\OR_n$. The special class of hyperplane covering polynomials 
for which Alon, Bar-Noy, Peleg and Linial~\cite{AlonBLP1991} proved a similar bound is the class of 
hyperplane covering polynomials where the linear forms are sums of variables 
(i.e., $L_i(\bar{z}) = \sum_{j \in S_i}z_j$ for some $S_i \subseteq [n]$)
Ideally, one would have liked to extend their lower bound result for
hyperplane covering polynomials where the linear forms are sums of
variables to {\em all} polynomials. \cref{thm:hpdeg_lbd}, is a step in
this direction, in that, it shows that
their result can be extended to a slightly larger class, the set of
all hyperplane covering polynomials (modulo polylogarithmic factors). We remark that though our 
lower bound works for a larger class of polynomials, our proof technique is nevertheless inspired by their proof. 

\paragraph*{Organization:} The rest of the paper is organized as
follows. After some preliminaries, we prove our improved upper bound
(\cref{thm:our ub}) in \cref{sec:upper} and prove the lower bound
(\cref{thm:hpdeg_lbd}) in \cref{sec:lower}.

\section{Preliminaries}

\paragraph{Notation:} For a string $x \in \zo^n$, we denote by $|x|$,
the Hamming weight of $x$. The $i$-Hamming slice will refer to the set
of strings $x$ such that $|x| = i$. For a set $S$, $|S|$ denotes the
cardinality of $S$. 

Recall the definition of $\eps$-probabilistic degree $\pdeg_\eps(f)$ from the introduction. The following propositions lists some basic properties of the probabilistic degree. 
\begin{proposition}\label{prop:probdegree}
\begin{enumerate}
    \item\label{item:a} $\pdeg_0(\OR_n) = n$.
    \item\label{item:b} If $ 0 \leq \eps \leq \eps' \leq \nicefrac13$, then $\pdeg_\eps(f) \geq \pdeg_{\eps'}(f)$.
    \item\label{item:c} For all $\eps \in [0,\nicefrac1{2^n})$, $\pdeg_\eps(f) = \pdeg_0(f)$.
    \item\label{item:d} For all constant $k$, $\pdeg_{\eps^k}(f) \leq k\cdot \pdeg_\eps(f)$.
    \item\label{item:e} For all $\eps \in [0,\nicefrac1{2^{0.1n}}]$, $\pdeg_\eps(\OR_n) = \Omega(n)$.
\end{enumerate}
\end{proposition}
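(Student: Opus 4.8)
The plan is to prove the five items in order, reusing earlier ones inside later ones; items \cref{item:a,item:b,item:c} are routine, \cref{item:d} is where the only real work lies, and \cref{item:e} is then a short corollary. For \cref{item:a}: the polynomial $1-\prod_{i\in[n]}(1-x_i)$ computes $\OR_n$ exactly and has degree $n$, so $\pdeg_0(\OR_n)\le n$; conversely, any polynomial agreeing with $\OR_n$ on $\zo^n$ reduces, modulo the relations $x_i^2=x_i$ and without an increase in degree, to this same multilinear polynomial, whose top monomial $(-1)^{n+1}x_1\cdots x_n$ has degree $n$, so every exact representation has degree $\ge n$; since a $0$-error distribution is supported (up to measure zero) on exact representations by a union bound over the $2^n$ inputs, $\pdeg_0(\OR_n)\ge n$. \cref{item:b} is immediate: a distribution witnessing $\pdeg_\eps(f)$ also witnesses $\pdeg_{\eps'}(f)$ whenever $\eps\le\eps'$. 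For \cref{item:c}, $\pdeg_\eps(f)\le\pdeg_0(f)$ is a special case of \cref{item:b}; for the converse, take an optimal $\eps$-error distribution $\distP$ of degree $d:=\pdeg_\eps(f)$ and union bound over the $2^n$ inputs: $\Pr_{P\sim\distP}[\,P(\bar x)\ne f(\bar x)\text{ for some }\bar x\,]\le 2^n\eps<1$, so some $P$ in the support has degree at most $d$ and represents $f$ exactly, whence $\pdeg_0(f)\le d$.

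For \cref{item:d}, given an optimal $\eps$-error distribution $\distP$ of degree $d$, I would draw $P_1,\dots,P_k$ independently from it and consider
\[
Q \;=\; 1-\prod_{i\in[k]}(1-P_i),
\]
a polynomial of degree at most $kd$. At an input $\bar x$ with $f(\bar x)=1$ we have $Q(\bar x)=1$ unless $P_i(\bar x)\ne 1$ for every $i$, an event of probability at most $\eps^k$ by independence, so $Q$ errs there with probability at most $\eps^k$. The hard part is the set of inputs with $f(\bar x)=0$: there one needs $\prod_i(1-P_i(\bar x))=1$, which is guaranteed by ``every $P_i$ correct'' (probability only $\ge(1-\eps)^k$) but not by ``some $P_i$ correct'', and since an erring $P_i$ may take an arbitrary real value, no fixed low-degree combination of $P_1(\bar x),\dots,P_k(\bar x)$ can recover $f(\bar x)$ once one coordinate is correct and another arbitrary. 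For the case actually needed downstream, $f=\OR_n$, this is harmless: the only zero of $\OR_n$ is the origin, so I would first replace $\distP$ by the distribution of $P-P(\bar 0)$ — same degree, every polynomial now vanishing at $\bar 0$, error at most doubled at the other inputs — and only then form $Q$; now $Q(\bar 0)=0$ identically while $Q$ errs with probability at most $(2\eps)^k$ elsewhere, which proves \cref{item:d} for $\OR_n$ up to a constant factor absorbed into the error (the clean bound $\eps^k$ follows with a little more care). I expect this origin-handling, and more generally making \cref{item:d} precise, to be the main obstacle.

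Finally, \cref{item:e} follows by chaining the previous items. Suppose $\pdeg_\eps(\OR_n)=d$ with $\eps\le 2^{-0.1n}$; fix a constant $k$ (say $k=20$) large enough that $(2\eps)^k<2^{-n}$ once $n$ exceeds an absolute constant (for smaller $n$ the bound $\Omega(n)$ is trivial). Then \cref{item:d} applied to $\OR_n$ gives $\pdeg_{(2\eps)^k}(\OR_n)\le kd$, while \cref{item:c} together with \cref{item:a} gives $\pdeg_{(2\eps)^k}(\OR_n)=\pdeg_0(\OR_n)=n$; hence $n\le kd$, i.e.\ $d=\Omega(n)$.
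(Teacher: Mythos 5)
Your items \cref{item:a,item:b,item:c} are proved correctly (multilinearization for the lower bound in \cref{item:a}, monotonicity for \cref{item:b}, the union bound over the $2^n$ inputs for \cref{item:c}), and your derivation of \cref{item:e} is the same chain the paper uses ($\pdeg_\eps \ge$ error-amplified degree $\ge \frac{1}{O(1)}\pdeg_0(\OR_n) = \Omega(n)$ via \cref{item:b,item:d,item:c,item:a}); the paper itself dismisses \cref{item:a,item:b,item:c,item:d} as immediate and only spells out \cref{item:e}. The genuine gap is \cref{item:d}: the proposition asserts $\pdeg_{\eps^k}(f)\le k\cdot\pdeg_\eps(f)$ for \emph{every} Boolean function $f$, whereas you only establish a weakened, $\OR$-specific substitute, $\pdeg_{(2\eps)^k}(\OR_n)\le k\cdot\pdeg_\eps(\OR_n)$, after correctly observing that the na\"{\i}ve composition $1-\prod_i(1-P_i)$ fails on the zero set of a general $f$ (an erring $P_i$ can take an arbitrary real value). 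Your parenthetical claim that ``the clean bound $\eps^k$ follows with a little more care'' is also left unsubstantiated, even for $\OR_n$. So as a proof of the statement as written, \cref{item:d} is not established.

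The missing idea that handles general $f$ is to combine an odd number of samples with the \emph{multilinear majority polynomial} rather than with an OR/AND-type product. Draw $P_1,\dots,P_{2k-1}$ independently from an optimal $\eps$-error distribution of degree $d$ and output $\mathrm{MAJ}_{2k-1}(P_1,\dots,P_{2k-1})$, where $\mathrm{MAJ}_{2k-1}$ is the (unique) multilinear polynomial computing majority on $\zo^{2k-1}$. If any $k$ of its inputs are fixed to the same Boolean value $b$, the restricted polynomial is the multilinear representation of a constant function and hence is identically $b$, independent of the remaining \emph{real-valued} inputs; e.g.\ $\mathrm{MAJ}_3(z_1,z_2,z_3)=z_1z_2+z_2z_3+z_3z_1-2z_1z_2z_3$ satisfies $\mathrm{MAJ}_3(b,b,z)=b$ identically. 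Consequently the combined polynomial equals $f(\bar{x})$ whenever at least $k$ of the samples are correct at $\bar{x}$ --- precisely the obstacle you identified --- so it errs with probability at most $\binom{2k-1}{k}\eps^k$ while having degree at most $(2k-1)d$. This yields \cref{item:d} for arbitrary $f$ up to constant factors in both the error exponent and the degree, which is all that \cref{item:e} (and the analogous \cref{prop:hcprobdegree}) consumes; with it, or even with just your $\OR$-specific version for \cref{item:e}, the remainder of your argument goes through.
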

\begin{proof}
\cref{item:a,item:b,item:c,item:d} follow from definition. For \cref{item:e}, we note that for $\eps \in [0,\nicefrac1{2^{0.1n}}]$, $\pdeg_\eps(\OR_n) \geq \pdeg_{\nicefrac1{2^{0.1n}}}(\OR_n) \geq 1/20 \cdot \pdeg_{\nicefrac1{2^{2n}}}(\OR_n) = 1/20 \cdot \pdeg_0(\OR_n) = \Omega(n)$. 
\end{proof}

Our notion of hyperplane covering polynomials
depends on the notion of a linear form.
\begin{definition}[linear form and its support]
	\label[definition]{def: support_linear_form}
	A linear form $L(x)$ is a homogenous degree one polynomial $a_{1} x_1 + a_{2} x_2 + \cdots +
a_{n}x_n$. Given a linear form $L(x) =a_{1} x_1 + a_{2} x_2 + \cdots +
a_{n}x_n$ we define the support of $L$, denoted as $\supp(L)$, to be
the set of variables $x_i$ whose corresponding coefficient $a_i$ in
$L$  is non-zero.
	\end{definition}	
	
Recall the notion of hyperplane covering polynomials and hyperplane covering probabilistic
degree $\hcpdeg_\eps(f)$ from the introduction. The following proposition is proved similarly to \cref{prop:probdegree}.
\begin{proposition}\label{prop:hcprobdegree}
\begin{enumerate}
    \item\label{item:A} $\hcpdeg_0(\OR_n) = n$.
    \item\label{item:B} If $ 0 \leq \eps \leq \eps' \leq \nicefrac13$, then $\hcpdeg_\eps(\OR_n) \geq \hcpdeg_{\eps'}(\OR_n)$.
    \item\label{item:C} For all $\eps \in [0,\nicefrac1{2^n})$, $\hcpdeg_\eps(\OR_n) = \hcpdeg_0(\OR_n)$.
    \item\label{item:D} For all constant $k$, $\hcpdeg_{\eps^k}(\OR_n) \leq k\cdot \hcpdeg_\eps(\OR_n)$.
    \item\label{item:E} For all $\eps \in [0,\nicefrac1{2^{0.1n}}]$, $\hcpdeg_\eps(\OR_n) = \Omega(n)$.
\end{enumerate}
\end{proposition}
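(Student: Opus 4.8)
The plan is to mirror \cref{prop:probdegree} part by part; since hyperplane covering polynomials form a subclass of all polynomials, the lower-bound parts only get easier, and the one genuinely new thing to check is that the constructions used for the upper-bound parts \cref{item:A,item:D} stay inside this subclass.

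For \cref{item:A} I would exhibit $1-\prod_{i\in[n]}(1-x_i)$, which is a degree-$n$ hyperplane covering polynomial computing $\OR_n$ exactly, giving the upper bound $\hcpdeg_0(\OR_n)\le n$; the matching lower bound is $\hcpdeg_0(\OR_n)\ge\pdeg_0(\OR_n)=n$ via \cref{item:a}. \cref{item:B} is immediate: a distribution $\distP$ witnessing error $\eps$ also witnesses error $\eps'\ge\eps$. For \cref{item:C}, given $\eps<\nicefrac1{2^n}$ and an optimal $\eps$-error hyperplane covering probabilistic polynomial $\distP$ of degree $\hcpdeg_\eps(\OR_n)$, a union bound over the $2^n$ Boolean inputs shows $\Pr_{P\sim\distP}[\exists\,\bar{x}:P(\bar{x})\neq\OR_n(\bar{x})]\le 2^n\eps<1$, so the support of $\distP$ contains a hyperplane covering polynomial of degree $\le\hcpdeg_\eps(\OR_n)$ computing $\OR_n$ exactly; hence $\hcpdeg_0(\OR_n)\le\hcpdeg_\eps(\OR_n)$, and \cref{item:B} gives the reverse inequality.

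The part needing a small new observation is \cref{item:D}. I would take $P_1,\dots,P_k$ drawn independently from an optimal $\eps$-error hyperplane covering probabilistic polynomial $\distP$ for $\OR_n$ of degree $d:=\hcpdeg_\eps(\OR_n)$, write $1-P_i=\prod_j(1-L^{(i)}_j)$, and form $Q:=1-\prod_{i\in[k]}(1-P_i)$. The point is that $1-Q=\prod_{i\in[k]}\prod_j(1-L^{(i)}_j)$ is again a product of affine forms, so $Q$ is a hyperplane covering polynomial, now of degree $\le kd$. For correctness: every hyperplane covering polynomial vanishes at $\bar{0}$ (the $L^{(i)}_j$ are homogeneous), so $Q(\bar{0})=0=\OR_n(\bar{0})$ always; and for $\bar{x}\neq\bar{0}$, $Q(\bar{x})\neq 1$ holds exactly when $P_i(\bar{x})\neq 1=\OR_n(\bar{x})$ for every $i$, which by independence has probability at most $\eps^k$. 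Thus $Q$ is an $\eps^k$-error hyperplane covering probabilistic polynomial of degree $\le kd$, proving $\hcpdeg_{\eps^k}(\OR_n)\le k\cdot\hcpdeg_\eps(\OR_n)$.

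Finally, \cref{item:E} chains the above exactly as in the proof of \cref{item:e}: for $\eps\in[0,\nicefrac1{2^{0.1n}}]$ (with $n$ large enough that $\nicefrac1{2^{0.1n}}\le\nicefrac13$; small $n$ is absorbed into the $\Omega$),
\[
\hcpdeg_\eps(\OR_n)\ \ge\ \hcpdeg_{\nicefrac1{2^{0.1n}}}(\OR_n)\ \ge\ \tfrac1{20}\hcpdeg_{\nicefrac1{2^{2n}}}(\OR_n)\ =\ \tfrac1{20}\hcpdeg_0(\OR_n)\ =\ \tfrac{n}{20},
\]
using \cref{item:B}, then \cref{item:D} with $k=20$, then \cref{item:C} (since $\nicefrac1{2^{2n}}<\nicefrac1{2^n}$), then \cref{item:A}. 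The only obstacle worth flagging is the closure observation in \cref{item:D}: for arbitrary probabilistic polynomials one has more freedom in amplifying error, but here it is precisely because "$1-P$ is a product of affine forms" is preserved under multiplication that the amplification argument stays within the class.
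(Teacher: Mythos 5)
Your proof is correct and follows essentially the same route as the paper, which proves \cref{prop:hcprobdegree} ``similarly to \cref{prop:probdegree}'': the lower-bound parts transfer since $\hcpdeg_\eps \geq \pdeg_\eps$, and your amplification step for \cref{item:D} is exactly the closure property the paper records in \cref{claim: or of hcps is hcp} combined with the one-sided error of $\OR_n$ at $\bar{0}$. Nothing further is needed.
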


Hyperplane covering
polynomials have the following closure property.
\begin{claim}
\label{claim: or of hcps is hcp}
	Let $t$ be a positive integer. For each $i \in [t]$, let $P_i$ be a hyperplane covering polynomial of degree $d_i$.  Let $P = 1 - \prod_{i \in [t]} (1 - P_i)$. Then $P$ is a hyperplane covering polynomial of degree at most $\sum_{i \in [t]} d_i$.		 	
\end{claim}
\begin{proof}
	For all $i \in [t]$, since $P_i$ is hyperplane covering polynomial, there exist linear forms $L_{i,1},\dots , L_{i,d_i}$  such that 
	$P_i= 1 - \prod_{j \in [d_i]}(1 - L_{i,j})$. 
	\begin{align*}
		P &= 1 - \prod_{i \in [t]}(1 - P_i)
		= 1 - \prod_{j \in [d_j]}(1 -L_{i,j}).
	\end{align*}
	Therefore by \cref{def:hcpdeg}, $P$ is a hyperplane covering polynomial of degree at most $\sum_{i \in [t]}d_i$.	
\end{proof}

The proof of our lower bound requires the following variant of the
Schwartz-Zippel Lemma (due to Alon and F\"{u}redi~\cite{AlonF1993})
and Littlewood-Offord-Erd\"{o}s' anti-concentration lemma of linear
forms over the reals, which we state below.

    \begin{lemma}[{\cite[Theorem~5]{AlonF1993}}]
	\label[lemma]{lem:sz}
		Let $P\in \reals[x_1,\ldots,x_n]$ be a
                polynomial of degree at most $d$\\
                 over $\reals$ computing a non-zero function over $\{0,1\}^n$. Then for $x$ chosen uniformly from $\{0,1\}^n,$
			\[\prob{x \in \zo^n}{P(x) \neq 0} \geq \frac{1}{2^d}\ .\]
	\end{lemma}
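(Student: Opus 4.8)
The plan is to prove this by a standard reduction to the multilinear case, followed by an induction on the number of variables. First I would replace $P$ by its multilinearization $\tilde{P}$, obtained by repeatedly substituting $x_i^2 \mapsto x_i$ (equivalently, reducing modulo the ideal generated by the $x_i^2 - x_i$). This substitution never increases the degree of a monomial, so $\deg \tilde{P} \le d$, and $\tilde{P}$ agrees with $P$ at every point of $\zo^n$. Since a multilinear polynomial that vanishes on all of $\zo^n$ is identically zero (the monomials $\prod_{i \in S} x_i$ evaluate to linearly independent functions on $\zo^n$, by a triangular argument over set inclusion), $\tilde{P}$ is a \emph{nonzero} multilinear polynomial of degree at most $d$. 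It therefore suffices to show that any nonzero multilinear polynomial of degree at most $d$ is nonzero on at least a $2^{-d}$ fraction of $\zo^n$, and then transfer this bound back to $P$ using that $P$ and $\tilde{P}$ take the same values on $\zo^n$.

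I would prove the multilinear statement by induction on $n$. The base case $n = 0$ is immediate: a nonzero multilinear polynomial in zero variables is a nonzero constant, hence nonzero on the unique point of $\zo^0$, so the fraction is $1 \ge 2^{-d}$. For the inductive step, write $\tilde{P}(x_1,\dots,x_n) = x_n \cdot Q(x_1,\dots,x_{n-1}) + R(x_1,\dots,x_{n-1})$, where $Q$ and $R$ are multilinear with $\deg Q \le d-1$ and $\deg R \le d$. If $Q \equiv 0$, then $\tilde{P} = R$ is a nonzero multilinear polynomial of degree at most $d$ in $n-1$ variables, so by the induction hypothesis it is nonzero on a $\ge 2^{-d}$ fraction of $\zo^{n-1}$, and hence $\tilde{P}$ is nonzero on a $\ge 2^{-d}$ fraction of $\zo^n$. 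If $Q \not\equiv 0$, then since $\deg Q \le d-1$ the induction hypothesis gives that $Q$ is nonzero on at least a $2^{-(d-1)}$ fraction of $\zo^{n-1}$, i.e., on at least $2^{n-d}$ points. For each such point $a$, the values $\tilde{P}(a,0) = R(a)$ and $\tilde{P}(a,1) = Q(a) + R(a)$ cannot both vanish, since their difference is $Q(a) \ne 0$; so at least one of the two extensions of $a$ to $\zo^n$ is a nonzero point of $\tilde{P}$. This produces at least $2^{n-d}$ nonzero points of $\tilde{P}$ out of $2^n$, a fraction of at least $2^{-d}$, completing the induction.

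I do not expect a serious obstacle here, since this is a classical fact. The only points requiring care are the multilinearization step — checking that the reduction is degree-nonincreasing and that it sends a polynomial nonzero \emph{as a function on $\zo^n$} to one nonzero \emph{as a polynomial} — and keeping the two cases of the induction straight. The counting in the case $Q \not\equiv 0$ is precisely where the gain of a factor of two in the induction hypothesis is consumed, which is why the bound comes out tight at $2^{-d}$.
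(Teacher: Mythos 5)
Your proof is correct, and it is self-contained, whereas the paper does not prove this lemma at all: it is quoted as Theorem~5 of Alon and F\"uredi and used as a black box. Your route is the standard elementary one for the Boolean cube: multilinearize (degree-nonincreasing, value-preserving on $\zo^n$, and nonzero as a polynomial by linear independence of the monomials $\prod_{i\in S}x_i$ as functions on the cube), then induct on $n$ via the split $\tilde P = x_n Q + R$; the case analysis and the counting in the case $Q\not\equiv 0$ (at least $2^{n-d}$ witnesses, one per nonzero point of $Q$) are all sound, including the degenerate cases $d=0$ and $Q\equiv 0$. The cited source proceeds differently: Alon--F\"uredi prove a stronger statement about polynomials that do not vanish everywhere on a general grid $S_1\times\cdots\times S_n$, lower-bounding the number of non-vanishing points by a ``balls-in-boxes'' minimization, of which the $2^{n-d}$ bound on $\zo^n$ is the special case $|S_i|=2$. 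Your argument buys simplicity and exactly the statement needed here; theirs buys generality. Either is acceptable as a justification of \cref{lem:sz}.
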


	\begin{lemma}[\cite{LittlewoodO1938,Erdos1945}]
	\label[lemma]{lem:lo}
		Let $L(x_1,\ldots,x_k) = \sum a_i x_i$ be a linear
                form which is supported on exactly $k$ variables
                (i.e., $a_i \neq 0, i = 1,\ldots,k$). Then, for all $a
                \in \reals$ and $x$ chosen uniformly from $\{0,1\}^n,$
	        \[\prob{x\in \{0,1\}^n}{L_i(x) = a} \leq \frac{1}{\sqrt{k}}\ .\]
	\end{lemma}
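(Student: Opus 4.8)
The final statement is the classical Littlewood--Offord--Erd\H{o}s anti-concentration inequality for $\{0,1\}$-linear combinations, and the plan is to follow Erd\H{o}s's proof via Sperner's theorem. First I would note that the $n-k$ coordinates of $x$ on which $L$ does not depend are irrelevant: conditioning on their values (or simply marginalizing), $\prob{x \in \zo^n}{L(x) = a} = \prob{y \in \zo^k}{\sum_{i=1}^k a_i y_i = a}$ for $y$ uniform in $\zo^k$, where $a_1,\dots,a_k$ are exactly the nonzero coefficients. So it suffices to bound this probability over $\zo^k$.

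Next I would reduce to the case where all coefficients are strictly positive. If $a_i < 0$ for some $i$, apply the change of variables $y_i \mapsto 1 - y_i$: this is a measure-preserving bijection of $\zo^k$, it replaces the coefficient $a_i$ by $-a_i > 0$, and (collecting the constant $a_i$ onto the other side) it replaces the target $a$ by $a - a_i$. Performing this substitution at every index with a negative coefficient, we may assume $a_i > 0$ for all $i \in [k]$, with a new but still fixed target which I continue to denote $a$.

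Now associate to each $y \in \zo^k$ the set $S_y = \{ i \in [k] : y_i = 1\}$, so that $\sum_{i=1}^k a_i y_i = \sum_{i \in S_y} a_i$, and let $\mathcal{F} = \{ S \subseteq [k] : \sum_{i \in S} a_i = a \}$. The key observation is that $\mathcal{F}$ is an antichain: if $S \subsetneq T$ with both in $\mathcal{F}$, then since all $a_i > 0$ and $T \setminus S \neq \emptyset$ we get $\sum_{i \in T} a_i = \sum_{i \in S} a_i + \sum_{i \in T \setminus S} a_i > a$, a contradiction. By Sperner's theorem, $|\mathcal{F}| \leq \binom{k}{\lfloor k/2 \rfloor}$. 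Since $y \mapsto S_y$ is a bijection from $\zo^k$ onto the power set of $[k]$, this yields
\[
\prob{y \in \zo^k}{\sum_{i=1}^k a_i y_i = a} = \frac{|\mathcal{F}|}{2^k} \leq \frac{1}{2^k}\binom{k}{\lfloor k/2 \rfloor}.
\]

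It then remains to invoke the standard central-binomial-coefficient estimate $\binom{k}{\lfloor k/2 \rfloor} \leq 2^k/\sqrt{k}$, valid for every positive integer $k$ (checked directly for small $k$, and following for larger $k$ from, e.g., a short induction giving $\binom{2m}{m} \leq 4^m/\sqrt{3m+1}$, or from Stirling's formula); combined with the displayed inequality this gives the claimed bound $\prob{x \in \zo^n}{L(x) = a} \leq 1/\sqrt{k}$. The only point requiring care in the whole argument is the sign-reduction step, which must be carried out so that the subsequent antichain/Sperner argument genuinely applies to strictly positive weights; once that is in place the proof is routine. (I would also remark that the original Littlewood--Offord argument only gave the weaker bound $O(\log k / \sqrt{k})$; the sharp constant above is Erd\H{o}s's improvement.)
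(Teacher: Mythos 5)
Your proof is correct: the reduction to the $k$ supported coordinates, the sign-flip $y_i \mapsto 1-y_i$ to force positive coefficients, the observation that the level set $\{S : \sum_{i\in S} a_i = a\}$ is an antichain, and the application of Sperner's theorem together with $\binom{k}{\lfloor k/2\rfloor} \leq 2^k/\sqrt{k}$ all go through. The paper does not prove this lemma at all --- it cites it as a classical black box from Littlewood--Offord and Erd\H{o}s --- and your argument is precisely Erd\H{o}s's original proof from the cited reference, so there is nothing to compare beyond noting that you have supplied the standard proof the paper omits.
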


Our lower and upper bounds will involve expressions of the form $(\log
n - \log \log (\nicefrac 1\eps))\cdot \log (\nicefrac1\eps)$. The
following claim lets us rewrite this expression more compactly in
terms of binomial coefficients. 

\begin{claim}\label{clm:binomial} For $\eps \in [\nicefrac1{2^{n/2}}, 1/2)$,
  we have
  $(\log n - \log
  \log (\nicefrac1\eps))\cdot \log(\nicefrac 1\eps) = \Theta\left(\log
  \binom{n}{\leq \log(\nicefrac1\eps)}\right).$
\end{claim}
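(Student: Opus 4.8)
Let me set $k = \lfloor \log(\nicefrac1\eps) \rfloor$, so that $k \leq n/2$ by the hypothesis on $\eps$; the plan is to estimate $\log \binom{n}{\leq k}$ from above and below and match it to $(\log n - \log k)\cdot k$ (using $\log\log(\nicefrac1\eps) = \Theta(\log k)$ up to additive constants, which is harmless since $k \geq 1$).

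For the upper bound on $\binom{n}{\leq k}$: since $k \leq n/2$, the largest binomial coefficient in the sum is $\binom{n}{k}$, so $\binom{n}{\leq k} \leq (k+1)\binom{n}{k} \leq (k+1)(en/k)^k$. Taking logs gives $\log\binom{n}{\leq k} \leq \log(k+1) + k(\log(n/k) + \log e) = O(k(\log n - \log k))$, where I absorb the $\log(k+1)$ term using $k \geq 1$ and $\log n - \log k \geq 1$ (the latter because $k \leq n/2$). For the lower bound, just use the single term $\binom{n}{k} \geq (n/k)^k$, so $\log\binom{n}{\leq k} \geq k(\log n - \log k) = \Omega(k(\log n - \log k))$.

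It remains to replace $k = \lfloor \log(\nicefrac1\eps)\rfloor$ by $\log(\nicefrac1\eps)$ and $\log k$ by $\log\log(\nicefrac1\eps)$ in the expression $k(\log n - \log k)$, incurring only constant factors. Since $\eps < 1/2$ we have $\log(\nicefrac1\eps) > 1$, hence $k \geq 1$ and $k = \Theta(\log(\nicefrac1\eps))$ and $\log k = \log\log(\nicefrac1\eps) + O(1) = \Theta(\log\log(\nicefrac1\eps))$ whenever $\log\log(\nicefrac1\eps) \geq 1$; the remaining range $1 < \log(\nicefrac1\eps) < 4$ is a trivial boundary case where both sides are $\Theta(\log n)$ directly. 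Combining, $\log\binom{n}{\leq \log(\nicefrac1\eps)} = \Theta\big((\log n - \log\log(\nicefrac1\eps))\cdot \log(\nicefrac1\eps)\big)$, as claimed.

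The only mild obstacle is bookkeeping at the boundaries: one must check that $\log n - \log k$ stays bounded away from $0$ (guaranteed by $\eps \geq \nicefrac1{2^{n/2}}$, i.e., $k \leq n/2$, so $\log n - \log k \geq 1$), and that the floor and the $\log\log$-vs-$\log k$ substitutions do not break the asymptotics when $\eps$ is close to $1/2$; both are handled by noting $k \geq 1$ and treating the $O(1)$-sized range of $\log(\nicefrac1\eps)$ separately. Everything else is the standard two-sided estimate $\left(\tfrac{n}{k}\right)^k \leq \binom{n}{k} \leq \binom{n}{\leq k} \leq (k+1)\left(\tfrac{en}{k}\right)^k$.
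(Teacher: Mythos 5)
Your proof is correct and follows essentially the same route as the paper's: the standard bounds $(n/k)^k \leq \binom{n}{\leq k} \leq (en/k)^k$ (you add a harmless $(k+1)$ factor) applied with $k = \lfloor \log(\nicefrac1\eps) \rfloor \in [1, n/2]$. Your extra bookkeeping at the boundaries is just a more explicit version of the paper's parenthetical remark that $\log n - \log\log(\nicefrac1\eps) = \Theta(\log n - k)$.
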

\begin{proof}
Consider an integer $k\in [1,\nicefrac{n}{2}]$. Then we have the following well known bounds: 
\[\left(\frac{n}{k}\right)^k \leq \binom{n}{\leq k}\leq \left(\frac{en}{k}\right)^k.\]
The claim now follows directly with $k=\lfloor\log(\nicefrac1 \eps)\rfloor$, which is between $1$ and $\nicefrac{n}{2}$. (Notice that  $\log n -\log\log(\nicefrac{1}{\eps})=\Theta(\log n- k)$.)


  \end{proof}

\section{Upper bounds on probabilistic degree of OR}\label{sec:upper}
	Prior to this work, the best known construction of a probabilistic polynomial of $\OR_n$ in terms of degree was due to Beigel, Reingold and Spielman~\cite{BeigelRS1991} and Tarui~\cite{Tarui1993}.
	\begin{theorem}[\cite{BeigelRS1991,Tarui1993}]\label{thm:BRSTar}
		For any positive integer $n$ and $\eps \in (0,1/3)$,
			$$\pdeg_\eps(\OR_n) \leq \hcpdeg_{\eps}(\OR_n) = O(\log{n}\cdot \log{1/\eps}).$$	
	\end{theorem}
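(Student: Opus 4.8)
Proof proposal for Theorem \ref{thm:BRSTar} (the BRS/Tarui upper bound $\hcpdeg_\eps(\OR_n) = O(\log n \cdot \log(1/\eps))$).

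The plan is to first handle the "one-sided error on a single Hamming slice" case and then amplify. Concretely, I would construct, for each threshold $k$ that is a power of $2$, a random linear form $L$ over $\reals$ such that for any $\bar x \in \zo^n$: if $|\bar x| = 0$ then $L(\bar x) = 0$ with probability $1$; and if $|\bar x|$ is "around $k$" (say $k/2 < |\bar x| \le k$), then $\Pr[L(\bar x) \neq 0]$ is bounded below by a constant, say $\ge 1/4$. The standard way: pick a uniformly random subset $S \subseteq [n]$ where each $i$ is included independently with probability $1/k$, and pick uniformly random real coefficients $a_i$ for $i \in S$ (e.g. from some large finite set, or just generically) to form $L(\bar x) = \sum_{i\in S} a_i x_i$. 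For $\bar x$ with $|\bar x| \approx k$, the event $S \cap \supp(\bar x) \ne \emptyset$ has constant probability, and conditioned on that, generic coefficients ensure the weighted sum is nonzero with probability $1$ (no accidental cancellation). This single linear form is a hyperplane covering polynomial of degree $1$.

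The second step is boosting the success probability from a constant to $1 - \eps$ on each relevant slice: take an independent product. Using \cref{claim: or of hcps is hcp}, if $L^{(1)}, \dots, L^{(m)}$ are $m$ i.i.d.\ copies of the above linear form, then $P = 1 - \prod_{j\in[m]} (1 - L^{(j)})$ is a hyperplane covering polynomial of degree $m$, and for $\bar x$ with $|\bar x| \approx k$ we get $\Pr[P(\bar x) \ne 1] \le (3/4)^m$, while $P(\bar 0) = 0$ always. Choosing $m = O(\log(1/\eps) + \log\log n)$ makes this failure probability at most $\eps / \log n$. The third step is to union over the $\lfloor \log n\rfloor + 1$ dyadic scales $k \in \{1, 2, 4, \dots, 2^{\lceil \log n\rceil}\}$: let $P_k$ be the degree-$O(\log(1/\eps)+\log\log n)$ construction tuned to scale $k$, and set $P = 1 - \prod_k (1 - P_k)$, again a hyperplane covering polynomial by \cref{claim: or of hcps is hcp}, of degree $O(\log n \cdot (\log(1/\eps) + \log\log n)) = O(\log n \cdot \log(1/\eps))$ (the $\log\log n$ term is absorbed since if $\log(1/\eps) < \log\log n$ one can instead just use $\eps$ a bit smaller, or note $\log\log n \cdot \log n = O(\log^2 n)$ is dominated in the regime that matters; alternatively use the cleaner bound $m = O(\log(n/\eps))$ and union-bound so that the total error is $\le \eps$). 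For any $\bar x \ne \bar 0$, there is exactly one dyadic scale $k$ with $|\bar x| \in (k/2, k]$, and on that scale $P_k(\bar x) = 1$ except with probability $\le \eps / \log n$; since $P$ outputs $1$ whenever any $P_k$ does, and $P(\bar 0) = 0$ always, the total error is at most $\eps$ by a union bound over the (at most one) bad scale—in fact just the single relevant scale—giving $\Pr[P(\bar x) \ne \OR_n(\bar x)] \le \eps$.

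The main technical point to get right—though it is not really an obstacle—is the anti-cancellation guarantee for the linear forms: one must ensure that when $S$ hits the support of $\bar x$, the sum $\sum_{i \in S \cap \supp(\bar x)} a_i$ is genuinely nonzero. Over $\reals$ this is immediate for generic (or suitably chosen large-range random) coefficients, so unlike the finite-field case there is no need for an "inclusion-weights" trick. The only other thing to verify is the arithmetic in balancing $m$ against the $\log n$ union bound so the final degree comes out as $O(\log n \cdot \log(1/\eps))$; this is routine. I would also remark that every polynomial produced is of the form $1 - \prod_i(1 - L_i)$ by construction, so this simultaneously establishes the bound for $\hcpdeg_\eps$, which is what the theorem statement claims.
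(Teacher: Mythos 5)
There is a genuine gap, and it is in the very first step: your random linear form is tested against the wrong event. A hyperplane covering polynomial $P=1-\prod_{j}(1-L^{(j)})$ satisfies $P(\bar x)=1$ if and only if \emph{some} $L^{(j)}(\bar x)$ equals exactly $1$; it is not enough that $L^{(j)}(\bar x)\neq 0$. Since the probabilistic-degree definition demands $P(\bar x)=\OR_n(\bar x)=1$ exactly on nonzero inputs, your construction with generic (or continuously random) real coefficients $a_i$ fails completely: for a fixed nonzero $\bar x$, generic coefficients make $L(\bar x)=1$ an event of probability $0$, so $\prod_j(1-L^{(j)}(\bar x))$ is almost surely nonzero and $P(\bar x)\neq 1$ almost surely. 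Your closing remark that over $\reals$ ``there is no need for an inclusion-weights trick'' has the difficulty backwards: over $\reals$, avoiding cancellation is trivial, but hitting the value $1$ exactly is the whole point. The paper's construction (\cref{lem:epoch_ hcp}) resolves this by taking $L_S(x)=\sum_{i\in S}x_i$ with each coordinate included with probability $2^{-\ell}$, so that for $|\bar x|\in[2^\ell,2^{\ell+1}]$ the set $S$ meets $\supp(\bar x)$ in \emph{exactly one} coordinate with constant probability, forcing $L_S(\bar x)=1$; the degree-$n$ exact polynomial in \eqref{eq:OR slice polynomial} similarly uses the forms $\frac1i\sum_j x_j$, which equal $1$ on the $i$-th slice. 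Any repair of your argument essentially reverts to this ``exactly one survivor'' (or rescaling) device.

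A secondary, fixable issue: your per-scale amplification to error $\eps/\log n$ and the resulting degree $O(\log n\cdot(\lepsinv+\log\log n))$ does not give the claimed bound when $\eps$ is a constant (then $\lepsinv=O(1)$ but $\log\log n$ is unbounded, and shrinking $\eps$ artificially does not remove the loss). The union bound over scales is unnecessary: for a nonzero $\bar x$ only the unique scale containing $|\bar x|$ matters, because if that scale's polynomial outputs $1$ then $P(\bar x)=1$ regardless of what the other scales do, and all scales output $0$ at $\bar 0$ deterministically. So error $\eps$ per scale suffices, exactly as in the paper's proof of \cref{thm:BRSTar}, giving degree $O(\log n\cdot\lepsinv)$ with no $\log\log n$ term.
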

	Note that since every Boolean function can be represented exactly by a polynomial of degree $n$, the above upper bound is meaningful only when $\eps \ge \frac{1}{2^{n/\log n}}$.     
	We modify the construction of Beigel, Reingold and Spielman~\cite{BeigelRS1991} and Tarui~\cite{Tarui1993} and give a strictly better upper bound in terms of probabilistic degree.  
	\begin{theorem}
	\label{thm:our ub}
			For any positive integer $n$ and $\eps \in (0,1/3)$,
			$$\pdeg_\eps(\OR_n) \leq \hcpdeg_{\eps}(\OR_n) = \bigo{\ans}.$$
	\end{theorem}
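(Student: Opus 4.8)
Since every hyperplane covering polynomial is a polynomial, $\pdeg_\eps(\OR_n)\le\hcpdeg_\eps(\OR_n)$, so it suffices to bound $\hcpdeg_\eps(\OR_n)$. Write $k:=\lepsinv$; I ignore rounding throughout, as we only care about the answer up to constant factors. The plan is to split inputs by Hamming weight. Fix a threshold $W$, to be chosen as $\Theta(k)$ (or $W=n$ when $\eps$ is tiny). I will build two hyperplane covering polynomials: a \emph{deterministic} one $P_{\mathrm{small}}$, of degree $W$, agreeing with $\OR_n$ on every input of Hamming weight at most $W$ (and on $\bar 0$); and a \emph{probabilistic} one $P_{\mathrm{large}}$, of degree $\bigo{k\,(\log n-\log W)}$, with $\Pr[P_{\mathrm{large}}(\bar x)=\OR_n(\bar x)]\ge 1-\eps$ for every $\bar x$ of Hamming weight exceeding $W$. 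Setting $\hat P:=1-(1-P_{\mathrm{small}})(1-P_{\mathrm{large}})$, \cref{claim: or of hcps is hcp} shows $\hat P$ is a hyperplane covering polynomial of degree at most $W+\deg(P_{\mathrm{large}})$, and it is an $\eps$-error probabilistic polynomial for $\OR_n$: on $\bar 0$ everything vanishes; on a nonzero $\bar x$ with $|\bar x|\le W$ we have $P_{\mathrm{small}}(\bar x)=1$, hence $\hat P(\bar x)=1=\OR_n(\bar x)$ surely; and on $\bar x$ with $|\bar x|>W$, $P_{\mathrm{large}}(\bar x)=1$ forces $\hat P(\bar x)=1$, so $\hat P$ errs with probability at most $\eps$.

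The one new ingredient is $P_{\mathrm{small}}$, and it is very simple. On a Boolean input $\bar x$ the linear form $\tfrac1w(x_1+\cdots+x_n)$ evaluates to $|\bar x|/w$, which equals $1$ exactly when $|\bar x|=w$; that is, the hyperplane $\{\tfrac1w\sum_i x_i=1\}$ contains precisely the $w$-th Hamming slice of the cube. Hence
\[
P_{\mathrm{small}}(x_1,\dots,x_n)\ :=\ 1-\prod_{w=1}^{W}\Bigl(1-\tfrac1w\sum_{i\in[n]}x_i\Bigr)
\]
is a hyperplane covering polynomial of degree $W$, and for $\bar x\in\zo^n$ the product $\prod_{w=1}^{W}(1-|\bar x|/w)$ vanishes exactly when $|\bar x|\in\{1,\dots,W\}$. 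Thus $P_{\mathrm{small}}=\OR_n$ on all inputs of weight at most $W$, with $P_{\mathrm{small}}(\bar 0)=0$. (Taking $W=n$ is the exact representation of $\OR_n$, recovering $\hcpdeg_0(\OR_n)\le n$ from \cref{prop:hcprobdegree}.)

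For $P_{\mathrm{large}}$ I would recycle the classical construction behind \cref{thm:BRSTar}, but keep only its ``high'' dyadic scales. For every integer $\ell$ with $\lceil\log W\rceil\le\ell\le\lceil\log n\rceil$ and every $s\in[r]$ with $r=\bigo{k}$, let $R_{\ell,s}\subseteq[n]$ include each coordinate independently with probability $2^{-\ell}$, put $M_{\ell,s}(x):=\sum_{i\in R_{\ell,s}}x_i$, and set $P_{\mathrm{large}}:=1-\prod_{\ell,s}(1-M_{\ell,s})$; this uses $r\bigl(\lceil\log n\rceil-\lceil\log W\rceil+1\bigr)=\bigo{k\,(\log n-\log W)}$ linear forms. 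Given $\bar x$ of weight $w>W$, set $\ell^{\star}:=\lceil\log w\rceil$; since $W<w\le n$ this scale is in range, $2^{\ell^{\star}}\in[w,2w)$, and
\[
\Pr\bigl[\,|R_{\ell^{\star},s}\cap\supp(\bar x)|=1\,\bigr]=w\,2^{-\ell^{\star}}(1-2^{-\ell^{\star}})^{w-1}=\Omega(1),
\]
the first factor lying in $(\tfrac12,1]$ and the second being at least $(1-\tfrac1w)^{w-1}\ge e^{-1}$ (note $w\ge 2$, since $W\ge 2$ when $\eps<1/3$). On that event $M_{\ell^{\star},s}(\bar x)=1$, so $P_{\mathrm{large}}(\bar x)=1$; taking $r=\bigo{\lepsinv}$ independent copies makes the probability that no copy at scale $\ell^{\star}$ succeeds at most $\eps$.

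Finally I would choose $W$ and total the degree. If $\lepsinv\ge n/2$, take $W=n$: then $\hat P=P_{\mathrm{small}}$ has degree $n$, and since $\binom{n}{\le\lepsinv}\ge\binom{n}{\le n/2}\ge 2^{n-1}$ this is $\bigo{\ans}$. If $\lepsinv<n/2$, take $W=\lceil\lepsinv\rceil$: then $\deg(\hat P)=W+\bigo{\lepsinv\,(\log n-\log W)}=\bigo{\lepsinv\,(\log n-\llepsinv)}$, which is $\bigo{\ans}$ by \cref{clm:binomial}. Either way $\hcpdeg_\eps(\OR_n)=\bigo{\ans}$. The whole argument is elementary; the only thing not already in \cref{thm:BRSTar} is the slice-covering form $\tfrac1w\sum_i x_i$, and the only step needing a little care is checking that dropping all scales below $\log W$ from the classical construction still leaves every weight $>W$ covered --- which is exactly the computation above.
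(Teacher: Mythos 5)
Your proof is correct and takes essentially the same route as the paper: the deterministic slice-covering polynomial $1-\prod_{w\le W}\bigl(1-\tfrac1w\sum_i x_i\bigr)$ for weights up to $W=\Theta(\lepsinv)$, combined (via the closure of hyperplane covering polynomials under OR) with the Beigel--Reingold--Spielman/Tarui random-subset construction kept only at the dyadic scales above $\log W$, giving degree $O(\lepsinv(\log n-\llepsinv))=\bigo{\ans}$. Your explicit handling of the regime $\lepsinv\ge n/2$ (taking $W=n$) is a small tidy addition the paper leaves implicit, but the argument is otherwise the paper's own.
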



  
  	To begin with, we observe that the following
	hyperplane covering polynomial of degree $n$ exactly computes $\OR_n$
	everywhere on the Boolean hypercube:
	\begin{equation}
	\label{eq:OR slice polynomial}
		P_{\OR}(x) := 1- \prod_{i =1}^n \cbra{1- \frac1i\sum_{j \in [n]} x_j}.
	\end{equation}
	For each $i \in [n]$,
	the degree $1$ polynomial $\frac1i\sum_{j \in [n]}x_j$ outputs
        $0$ on the zero input and $1$ on the $i$-th Hamming slice. $P_\OR$ outputs 1 if any of these degree 1 polynomials output 1. 
	
	

	We now recall the construction of Beigel, Reingold and
	Spielman~\cite{BeigelRS1991} and Tarui~\cite{Tarui1993}. 
	For each $0 \leq \ell \leq \log n - 1$,	
	they give a hyperplane covering probabilistic polynomial which
        outputs 0 on the zero input $\bar{0}$ and outputs 1 with
        constant probability for all inputs whose Hamming weight is in
        the range $[2^\ell,2^{\ell+1}]$.  
		
	\begin{lemma}[\cite{BeigelRS1991,Tarui1993}]
	\label{lem:epoch_ hcp}
		Let $n$ be a positive integer. For all integers $\ell$ such that $0 \leq \ell \leq \log n - 1$  and for all $\eps \in (0,1/3)$, there exists a distribution $\mathbf{P}_\ell$ on hyperplane covering polynomials of degree $O(\log {(1/\eps)})$ such that 
	\begin{itemize}
		\item $P(0^n)=0$ for all $P \sim \mathbf{P}_\ell$.
		\item for all inputs $x\ in\ \zo^n$ whose Hamming
                  weight is in the range $[2^\ell,2^{\ell+1}]$ , 	
			$$\prob{P \sim \mathbf{P}_\ell}{P(x) = 1} \geq 1-\eps.$$
	\end{itemize}							 
	\end{lemma}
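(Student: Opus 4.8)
The plan is to build the distribution $\mathbf{P}_\ell$ by a standard random-hashing (random-restriction-free) argument: we randomly partition the $n$ coordinates into roughly $2^{\ell+1}$ buckets, form for each bucket the linear form that is the sum of the variables falling in it, and then take a suitable random linear combination / product over a small number of such buckets. The key point is that when a string $x$ has Hamming weight in $[2^\ell, 2^{\ell+1}]$, a random partition into about $c \cdot 2^{\ell+1}$ parts will, with constant probability, isolate at least one of the $1$-coordinates of $x$ in a part containing no other $1$-coordinate; on such a part the linear form ``sum of variables in the part'' evaluates to exactly $1$. So the hyperplane $\{L = 1\}$ for that part passes through $x$ but not through $\bar 0$.

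First I would fix the number of parts to be $m = \Theta(2^{\ell})$ and describe one ``trial'': choose a uniformly random function $h \colon [n] \to [m]$, and for each $b \in [m]$ let $L_b(x) = \sum_{j : h(j) = b} x_j$. Note $L_b(\bar 0) = 0$ for every $b$. I would then argue that for any fixed $x$ with $|x| \in [2^\ell, 2^{\ell+1}]$, with probability at least some absolute constant $\delta > 0$ there exists a bucket $b$ containing exactly one coordinate $j$ with $x_j = 1$ (a birthday-type calculation: the expected number of ``isolated ones'' is $|x| \cdot (1 - 1/m)^{|x| - 1} = \Omega(|x|)$ when $m = \Theta(|x|)$, and a second-moment bound shows at least one isolated one occurs with constant probability). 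For such a trial, the single polynomial $1 - \prod_{b \in [m]} (1 - L_b(x))$ is a hyperplane covering polynomial of degree $m = \Theta(2^\ell)$ — but that degree is too large. To fix the degree I instead subsample: before building the product, keep each bucket independently with probability $1/2$ (or, equivalently, only form $L_b$ for $b$ in a random subset), iterate $O(\ell)$ nested levels of this halving so that each surviving ``bucket'' has a bounded number of forms, OR — cleaner — use the Beigel--Reingold--Spielman trick of taking, at each of $O(\log(1/\eps))$ independent stages, a single random linear form obtained as a random $\{0,1\}$-combination of the $L_b$'s and multiplying these together. Concretely: pick independent uniformly random $S_1, \dots, S_T \subseteq [m]$ with $T = O(\log(1/\eps))$, set $M_t = \sum_{b \in S_t} L_b$, and output $1 - \prod_{t \in [T]} (1 - M_t)$, which is a hyperplane covering polynomial of degree $T = O(\log(1/\eps))$ with value $0$ at $\bar 0$.

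The analysis of correctness then goes as follows. Condition on the event (probability $\geq \delta$ over the choice of $h$) that some bucket $b^\star$ isolates a $1$ of $x$, so $L_{b^\star}(x) = 1$; condition further on the values $L_b(x)$ for all other $b$. For each stage $t$, $M_t(x) = \sum_{b \in S_t} L_b(x)$ is a sum of independent $\{0,1\}$-masked reals; since $L_{b^\star}(x) = 1$ and $b^\star$ is included in $S_t$ with probability exactly $1/2$ independently, $M_t(x)$ is, conditioned on the contributions of the other buckets, equal to either $v$ or $v+1$ each with probability $1/2$ — so $M_t(x) = 1$ with probability at least $1/4$ regardless of the other contributions (either $v=0$ giving prob $1/2$, or $v=1$ giving prob $1/2$ from the ``$v$'' branch, or in any case at least one of the two equally-likely values lands in $\{1\}$ with... ) — more carefully, $\Pr[M_t(x) = 1] \geq 1/4$ by a two-point argument, hence $\Pr[M_t(x) \neq 1 \text{ for all } t] \leq (3/4)^T$. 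Choosing $T = \Theta(\log(1/\eps))$ makes this at most, say, $\eps/2$; combined with the conditioning failing with probability $\le 1-\delta$, amplify the whole construction by taking $O(1/\delta) = O(1)$ independent copies and OR-ing them (using \cref{claim: or of hcps is hcp}), so the degree stays $O(\log(1/\eps))$ and the error on each relevant Hamming slice drops below $\eps$, while the value at $\bar 0$ stays $0$ since every $L_b(\bar 0) = 0$.

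\textbf{Main obstacle.} The delicate point is the two-point / anti-concentration step ensuring $\Pr[M_t(x) = 1] \ge 1/4$ uniformly over the (arbitrary real) contributions of the non-isolated buckets: one must argue that adding an independent unbiased $\{0,1\}$ multiple of the value $1$ to an arbitrary real $v$ hits the target $1$ with probability $\ge 1/4$ — this is true because the two equally likely outcomes are $v$ and $v+1$, and hitting $1$ with probability $<1/4$ would require $1 \notin \{v, v+1\}$, which can certainly happen, so in fact the cleaner route is to re-randomize the target: instead of demanding $M_t(x) = 1$, note it suffices that $\prod_t(1 - M_t(x)) = 0$, i.e.\ that \emph{some} $M_t(x) = 1$; and to guarantee the single isolated bucket $b^\star$ is ``used correctly'' one should let $M_t$ be $L_{b^\star}$ with probability $1/2$ and otherwise a random combination of the rest — then $\Pr[M_t(x) = 1 \mid b^\star \text{ chosen}] = 1$. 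I would organize the final proof around this latter formulation, which avoids any real anti-concentration entirely and reduces the whole lemma to the isolation birthday bound plus the $(1/2)^T$ stage-amplification; the birthday second-moment estimate for ``at least one isolated one among $\Theta(2^\ell)$ buckets when $|x| \in [2^\ell, 2^{\ell+1}]$'' is then the only genuinely computational step.
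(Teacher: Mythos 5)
There is a genuine gap, and it sits exactly where you flagged the ``main obstacle.'' In your stage construction, $M_t(x) = \sum_{b \in S_t} L_b(x)$ with $S_t$ a uniformly random subset of the $m = \Theta(2^\ell)$ buckets. For $x$ with $|x| \in [2^\ell, 2^{\ell+1}]$, the non-isolated buckets together carry $|x|-1 \geq 2^\ell - 1$ ones, each bucket surviving into $S_t$ with probability $1/2$, so $M_t(x)$ concentrates around $|x|/2$ and $\Pr[M_t(x) = 1]$ is exponentially small in $|x|$ — not $\geq 1/4$. The two-point argument only randomizes the isolated bucket's contribution, but the value $v$ contributed by the other buckets is itself random and is almost never in $\{0,1\}$; there is no anti-concentration statement that forces a sum of this kind to hit the single value $1$ with constant probability. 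Your proposed repair makes things worse in a different way: letting $M_t$ be $L_{b^\star}$ with probability $1/2$ is not a legal construction, because $b^\star$ is defined as the bucket isolating a one of $x$ — it depends on the input, whereas the distribution $\mathbf{P}_\ell$ must be fixed obliviously before $x$ is seen. Replacing $b^\star$ by a uniformly random bucket only hits the isolating bucket with probability $O(2^{-\ell})$, so per-stage success is far from constant and $T = O(\log(1/\eps))$ stages do not suffice.

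The correct (and simpler) route, which is the paper's, is to subsample \emph{coordinates} rather than buckets: each stage samples $S \subseteq [n]$ by including each coordinate independently with probability $2^{-\ell}$ and uses the single linear form $L_S(x) = \sum_{i \in S} x_i$. Then for $|x| \in [2^\ell, 2^{\ell+1}]$ one has $\Pr[L_S(x) = 1] = |x|\,2^{-\ell}(1 - 2^{-\ell})^{|x|-1} = \Omega(1)$, since the expected number of surviving ones is $\Theta(1)$; taking $t = O(\log(1/\eps))$ independent such forms and outputting $1 - \prod_{i \in [t]}(1 - L_i)$ gives a hyperplane covering polynomial of degree $O(\log(1/\eps))$ vanishing at $\bar 0$ and equal to $1$ on the relevant slices with probability $\geq 1 - \eps$. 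Your hashing-and-isolation idea is morally in the same spirit (both aim to make exactly one surviving one with constant probability), but the degree-reduction step in your version is precisely where the argument breaks, and the direct coordinate-subsampling avoids the issue entirely.
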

\begin{proof} Fix $0 \leq \ell \leq \log n$  and $\eps \in
  (0,1/3)$. 
  
	We begin by defining a distribution $\distL_\ell$ of linear
        forms as follows: pick a random set $S \subseteq [n]$ by picking each element of $[n]$ independently with probability $\frac{1}{2^{\ell}}$ and construct the linear polynomial 
    \[L_S(x) := \sum_{i \in S} x_i\enspace.\]
    For a non-zero input $x = (x_1,\ldots,x_n)$ such that the Hamming
    weight $|x|$ of $x$ is in $[2^\ell, 2^{\ell+1}]$, we have
        \begin{align*}
            \prob{S}{L_S(x) = 1} &= |x|\cbra{\frac{1}{2^\ell}} \cbra{1-\frac{1}{2^\ell}}^{|x|-1}&&[\text{where\ }0^0=1]\\
                    &= \frac{|x|}{2^\ell} \exp{(-O(1))}&&[\because (1-a)^b\geq \exp{(-\nicefrac{ab}{1-a})}]\\
                    &= \Omega(1)\enspace.
        \end{align*}
	
      In order to get a probabilistic polynomial $\mathbf{P}_\ell$
      which satisfies the requirements of \cref{lem:epoch_ hcp}, we sample $t =O(\log {(1/\eps)})$ linear forms $L_1, \dots, L_t$ 
      independently from $\distL_\ell$ and construct the polynomial
	$$P_{L_1,\dots ,L_t}(x) := 1 - \prod_{i \in [t]}(1 - L_i(x))$$

Note that all the polynomials in the support of $\mathbf{P}_\ell$ are
hyperplane covering polynomials. For any $L_1,\dots ,L_t$, degree of
$P_{L_1,\ldots ,L_t}$ is $t = O(\log{1/\eps})$.  
	$P_{L_1,\ldots,L_t}(0^n) = 0$ since $L_i(0^n) = 0$ for all $i \in [t]$. 
	For any input $x$ such that $|x| \in [2^\ell,2^{\ell+1}]$,
        $P_{L_1,\ldots,L_t}$ errs on $x$ only if for all $i \in [t]$,
        $L_i(x) \neq 1$, which happens with probability at most
        inverse exponential in $t$ and hence at most $\eps$ (since
        $\prob{L_i \sim \distL_\ell}{L_i(x) \neq 1}$ is at most some
        constant less than 1 for each $i$).
\end{proof}

	
	\cref{thm:BRSTar} is obtained by considering the following probabilistic polynomial $\mathbf{P}$. 
	For each $\ell \in \{0, \dots ,\log n -1\}$, sample $P_\ell \sim \mathbf{P}_\ell$ and construct
		$$P := 1 -\prod_{\ell \in [\log n]} (1 - P_\ell).$$
This construction uses the probabilistic polynomial of degree $\lepsinv$
from \cref{lem:epoch_ hcp} for each of the $\log n$ epochs (where the
$\ell$-th epoch refers $\{ x \mid |x| \in [2^\ell,2^{\ell+1}\}$). This
turns out to be wasteful for the lower epochs ($\ell \leq
\llepsinv$). We observe that since the lower epochs have fewer slices,
we can gain by using the polynomial construction from \eqref{eq:OR
  slice polynomial} instead.     	
  
	\begin{proof}[Proof of \cref{thm:our ub}]
		Consider the following distribution $\mathbf{P}$ on
                hyperplane covering polynomials $P$: 
For each $\ell \in [\llepsinv,\log n - 1]$, sample $P_\ell \sim
\mathbf{P}_\ell$ independently and construct the polynomials $P', P''$
and $P$ as follows.
	\begin{align*}
		P'(x) &:= 1 -\prod_{\ell = \llepsinv}^{\log n -1}
						\cbra{
							1-P_\ell(x)
						},\\	
		P''(x) &:= 1 -\prod_{i =1}^{\lepsinv} 
					\cbra{
						1- \frac1i \sum_{j \in [n]} x_j
					},\\
		P(x) &:= 1 - (1-P'(x))(1- P''(x)) .
	\end{align*}		
	
	\begin{itemize}
		\item By \cref{lem:epoch_ hcp}, $P_\ell$ is a hyperplane covering polynomial for each $\ell \in [\llepsinv , \log n - 1]$. Therefore by \cref{claim: or of hcps is hcp},  $P'$  is hyperplane covering polynomial. Since $P''$ is also a hyperplane covering polynomial, so is $P$.    
		\item Observe that $P''(0^n) = 0$. 
	By \cref{lem:epoch_ hcp}, for all $\ell \in [\llepsinv , \log n - 1]$,
	$P_\ell(0^n) = 0$ and hence $P'(0^n) = 0$.
		\item For all $i \leq \lepsinv$, for all inputs $x$ 	
	from the $i$-th slice, $P''(x) = 1$ and hence $P(x) = 1$.
	For each $\ell \in [\llepsinv , \log n -1]$ and each input $x$ from the $\ell$-th epoch, 
		$$\prob{P_\ell \sim \mathbf{P}_\ell}{P_\ell(x) =1} \geq 1 -\eps.$$ 
		Since $P_\ell(x) = 1$ implies $P(x) =1$, 
			$$\prob{P \sim \mathbf{P}}{P(x) =1} \geq \prob{P_\ell \sim \mathbf{P}_\ell}{P_\ell(x) =1} \geq 1- \eps.$$	
	Therefore for all nonzero inputs $x$, $\prob{}{\mathbf{P}(x) = 1} \geq 1 - \eps $.	
	\item Since $P''$ has degree $\lepsinv$ and by
          \cref{lem:epoch_ hcp}  $P_\ell$ has degree at most
          $\lepsinv$ for each $\ell \in [\llepsinv, \log n -1]$, $P$ has
          degree $(\log n - \llepsinv ) \lepsinv = \bigo{\ans}$
          (using \cref{clm:binomial}). 
	\end{itemize}
	Therefore $\mathbf{P}$ is an $\eps$-error probabilistic polynomial for $\OR_n$ supported of hyperplane covering polynomials of degree $\bigo{\ans}$.		
\end{proof}

\section{Lower bound on hyperplane covering degree of OR}\label{sec:lower}
    We now turn to the lower bound. 
    To prove a lower bound of $d_\eps := \Omegatilde{\ans}$,  
    by Yao's minimax theorem (duality arguments) it suffices (and is necessary) to
    demonstrate a ``hard'' distribution $\calD_{\eps}$ on $\zo^n$ under which it is
    hard to approximate $\OR_n$ by any hyperplane covering
    polynomial of degree at most $d_\eps$. 

    Similar to previous
    works~\cite{MekaNV2016,HarshaS2019-ac0}, our choice of hard distribution is
    motivated by the polynomial constructions in the upper bound. 
    Our hard distribution $\calD_\eps$ is defined in terms of the following two distributions.
\begin{definition}[$p$-random assignment] 
    \label[definition]{def:zo-restriction}
		Let $p \in [0,1]$ and let $X = \{x_1, \ldots ,x_n\}$
                be 		a set of variables. A $p$-random
                assignment of $X$, denoted by $\mu_p^X$, is a random
                assignment $\mu\colon X \to \zo$ that is chosen as
                follows:  for each of the variables $x_i \in X$
                independently set $\mu(x_i)$ to 1 with probability $p$ and 0 with probability $1-p$.  	    
\end{definition}
    \begin{definition}[$p$-random $\zs$-restriction]
    \label[definition]{def:zs-restriction} 
	Let $p \in [0,1]$ and let $X= \{x_1, \ldots ,x_n\}$ be a set
        of variables. A p-random $\zs$-restriction of $X$, denoted by
        $\rho_p^X$, is a random restriction $\rho\colon X \to \{0,*\}$
        chosen as follows:  for each of the variables $x_i \in X$
        independently set $\rho(x_i)$ to 0 with probability $(1-p)$ and
        $*$ with probability $p$ (i.e., the variable is unset with
        probability $p$).
    \end{definition}
When the set of variables $X$ is clear from the context, we will drop the
superscript $X$ and refer to the corresponding distributions as
$\mu_p$ and $\rho_p$ respectively. 
	We observe that $\mu^X_p$ can be generated by sampling a
        $2p$-random $\zs$-restriction $\rho^X_{2p}$ and setting the
        unset variables (i.e., $\left(\rho^X_{2p}\right)^{-1}(*)$)
        according to $\mu_{1/2}$. In short, 
		$$\mu^X_{p} = \mu_{1/2}^{\left(\rho^X_{2p}\right)^{-1}(*)} \circ
                \rho^X_{2p}.$$ 		
	We use this observation crucially later on in the proof of the lower bound.        

    \begin{definition}[hard distribution]
	\label[definition]{def:hard_distribution}
		For $\eps \in [\nicefrac1{2^{n/2}},\nicefrac12)$, consider the distribution $\calD_\eps$ on the input set $\zo^n$ defined as follows:
    	 Let $I_{\eps} := [1,\log n - \llepsinv]\cap \mathbb{Z}$. 
		Pick an $\ell$ uniformly at random from $I_{\epsilon}$
                and output a random sample $x$ from $\mu_{1/2^\ell}$, i.e., for each $i \in [n]$,
                independently set $x_i\gets 1$ with probability $\nicefrac1{2^\ell}$ and $0$ otherwise.  
	\end{definition}
    
    The hard distribution $\calD_\eps$ is a convex combination of the
    distributions $\mu_{\nicefrac1{2^{\ell}}}$ for $\ell \in I_\eps$. In other words, $\calD_\eps := \frac{1}{|I_\eps|}
    \sum_{\ell \in I_\eps} \mu_{\nicefrac1{2^{\ell}}}$. Each of
    the distributions $\mu_{\nicefrac1{2^{\ell}}}$ roughly correspond to the
    epochs used in the upper-bound construction.  The following claim
    shows that the distribution
    $\calD_\eps$ puts probability at most $\eps$ on the all-zeros
    input $\bar{0}$.
\begin{claim}\label{clm:zero} For $\eps \geq  \nicefrac1{2^{n/2}}$, we have $\calD_\eps(0^n)
  \leq \eps$.
\end{claim}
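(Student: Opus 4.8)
The plan is to bound $\calD_\eps(0^n)$ directly from the definition as a convex combination. By \cref{def:hard_distribution},
\[
\calD_\eps(0^n) = \frac{1}{|I_\eps|} \sum_{\ell \in I_\eps} \mu_{\nicefrac1{2^\ell}}(0^n) = \frac{1}{|I_\eps|} \sum_{\ell \in I_\eps} \left(1 - \frac1{2^\ell}\right)^n,
\]
since $\mu_{\nicefrac1{2^\ell}}$ sets every coordinate to $0$ independently with probability $1 - \nicefrac1{2^\ell}$. Each summand is at most $\exp(-n/2^\ell)$ using $1-a \le e^{-a}$, and this is an increasing function of $\ell$, so it is dominated by its value at the largest index $\ell_{\max} := \log n - \llepsinv$. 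Hence every summand is at most $\exp\!\left(-n / 2^{\log n - \llepsinv}\right) = \exp\!\left(-n \cdot \frac{\log(\nicefrac1\eps)}{n}\right) = \exp(-\log(\nicefrac1\eps))$; being slightly careful with floors (the index range is intersected with $\integer$, so the true maximum index is $\lfloor \log n - \llepsinv \rfloor$), one still gets a bound of the form $\exp(-\Omega(\log(\nicefrac1\eps)))$, and with the base-$2$ versus natural-log bookkeeping this is at most $\eps^c$ for a constant $c$; in fact choosing the estimate carefully (or absorbing the constant) gives each summand $\le \eps$.

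Since $|I_\eps| \ge 1$ (here the hypothesis $\eps \ge \nicefrac1{2^{n/2}}$ is used: it guarantees $\llepsinv \le \log n - O(1)$ so that $I_\eps$ is a nonempty set of integers, and more generally that $\log n - \llepsinv \ge 1$), averaging a collection of numbers each at most $\eps$ yields $\calD_\eps(0^n) \le \eps$, as claimed.

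I would expect the only mildly delicate point to be the floor/ceiling and logarithm-base bookkeeping in showing $\left(1 - \nicefrac1{2^\ell}\right)^n \le \eps$ for the top epoch: one wants $2^\ell \le n/\log(\nicefrac1\eps)$ at $\ell = \lfloor \log n - \llepsinv\rfloor$, which holds because $\ell \le \log n - \llepsinv = \log(n/\log(\nicefrac1\eps))$, and then $(1-1/2^\ell)^n \le e^{-n/2^\ell} \le e^{-\log(\nicefrac1\eps)} \le \eps$ (the last step using $e \ge 2$). None of this is a real obstacle; the claim is essentially a one-line averaging argument once the per-epoch estimate is in place, and the role of the constraint $\eps \ge \nicefrac1{2^{n/2}}$ is simply to keep $I_\eps$ nonempty so the average is well defined.
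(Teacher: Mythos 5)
Your proposal is correct and follows essentially the same route as the paper: bound $\mu_{\nicefrac1{2^\ell}}(0^n) = (1-\nicefrac1{2^\ell})^n \leq e^{-n/2^\ell} \leq \eps$ for every $\ell \leq \log n - \llepsinv$, then average over $\ell \in I_\eps$. The extra care you take with floors, log bases, and the nonemptiness of $I_\eps$ (from $\eps \geq \nicefrac1{2^{n/2}}$) is fine but not a departure from the paper's one-line argument.
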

\begin{proof}
$\calD_\eps$ is generated by drawing an $\ell$ from $I_\eps$ at random and returning a draw from $\mu_{1/2^\ell}$.
 Since 
 $$\mu_{1/2^\ell}(0^n) = (1-1/2^\ell)^n \leq \eps$$
  for $\ell \leq \log n - \llepsinv$, $\calD_\eps(0^n) \leq \eps$.
\end{proof}



   \cref{thm:hpdeg_lbd} follows from the following ``distributional''
   version of the theorem for $\eps \in [\nicefrac1{2^{n/2}},\nicefrac13]$. For smaller $\eps$, \cref{thm:hpdeg_lbd} follows from \cref{prop:hcprobdegree}:\cref{item:E}.
   
   \begin{theorem}
	\label{thm: deterministic_polynomial_lb_wrt_hard_dist}
        Let $\eps \in [\nicefrac1{2^{n/2}},\nicefrac13]$ and $\calD_\eps$ be the hard distribution defined in
        \cref{def:hard_distribution} and $P = 1- \prod_{i \in [t]}\cbra{1-L_i}$ be a hyperplane covering polynomial of degree $t$ such that       
            \[\prob{x \sim \calD_\eps}{P(x) \neq OR_n(x)} \leq \eps\]
        then, $t \geq \lb$. 
      \end{theorem}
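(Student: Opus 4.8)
The plan is to exploit the tension between two facts: $P$ must agree with $\OR_n$ on most of the mass of $\calD_\eps$, so in particular $P$ must evaluate to $1$ with high probability on each $\mu_{1/2^\ell}$ for most $\ell \in I_\eps$; but $P(\bar 0)$ should be forced to be $0$ (or close to it), and $1-P = \prod_{i\in[t]}(1-L_i)$ vanishes at $\bar 0$ only if some $L_i(\bar 0)=1$, which is impossible for a linear \emph{form}, so in fact $1-P(\bar 0)=\prod_i (1-0)=1$, i.e.\ $P(\bar 0)=0$ automatically. The real content is a \emph{spread} argument on the supports of the $L_i$. First I would set up the restriction decomposition $\mu_{1/2^\ell}=\mu_{1/2}\circ\rho_{2/2^\ell}$ from the remark after \cref{def:zs-restriction}: after hitting $P$ with a $p$-random $\zs$-restriction $\rho$ (with $p$ roughly $2/2^\ell$ for an appropriately chosen ``critical'' scale $\ell$), each linear form $L_i$ restricts to a linear form $L_i|_\rho$ whose support is $\supp(L_i)\cap \rho^{-1}(*)$, and $P|_\rho$ is again a hyperplane covering polynomial of degree $\le t$ on the unset variables. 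The key point: if the restricted form $L_i|_\rho$ has large support (say $\ge k$ variables), then by the Littlewood–Offord–Erd\H os lemma (\cref{lem:lo}), $\Pr_{y\sim\mu_{1/2}}[L_i|_\rho(y)=1]\le 1/\sqrt k$, so such a form is ``useless'' for making $P$ equal to $1$; and if $L_i|_\rho$ has small support ($<k$ variables), then there are at most $t$ of them, so they touch fewer than $tk$ variables total.

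Next I would quantify this. On one hand, correctness forces: for a noticeable fraction of $\ell\in I_\eps$, $\Pr_{x\sim\mu_{1/2^\ell}}[P(x)=1]\ge 1-O(\eps|I_\eps|)$, hence after the restriction, $\Pr_{\rho,y}[P|_\rho(y)=1]$ is close to $1$; since $P|_\rho(y)=1$ iff some $L_i|_\rho(y)=1$, a union bound over the $t$ forms together with \cref{lem:lo} says that with good probability over $\rho$, at least one $L_i$ must have restricted support smaller than roughly $t^2$ (otherwise each contributes $\le 1/t$ and the union bound gives $P|_\rho(y)=1$ with probability $\le 1$, a contradiction once we're careful with constants). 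On the other hand, the variables in the small-support forms can be ``killed'' by conditioning/restricting further, and then on the remaining (still large) set of unset variables no form can hit the value $1$ at all with the right probability — so $P$ must in fact be identically $0$ there, contradicting correctness on a nonzero Hamming slice that still has mass under (a restricted version of) $\calD_\eps$. Iterating this ``peel off the low-support forms, recurse on the rest'' across the $\Theta(\log n - \llepsinv)$ distinct scales $\ell\in I_\eps$ is what produces the bound $t\gtrsim (\log n-\llepsinv)\cdot\lepsinv / \mathrm{polylog}$, which by \cref{clm:binomial} is $\lb$; the $\lepsinv$ factor per scale comes from needing $t$ large enough that $t/\sqrt{k}$-type bounds beat $1-O(\eps)$, i.e.\ that an individual form at scale $\ell$ needs support $\gtrsim 2^{2\ell}$ by Littlewood–Offord but restriction shrinks it, so you need $\Omega(\log(1/\eps))$ forms surviving per scale, and the $\log n-\llepsinv$ factor comes from the scales being on geometrically separated ``epochs'' that cannot be served simultaneously by a single low-degree form.

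Concretely, the steps in order: (1) WLOG reduce to $\eps\in[1/2^{n/2},1/3]$ and invoke the distributional formulation. (2) Observe $P(\bar 0)=0$ and that correctness on $\calD_\eps$ means $P$ is $1$ on $\ge 1-O(\eps)$ of the mass of $\mu_{1/2^\ell}$ for an $\Omega(1)$-fraction of $\ell\in I_\eps$. (3) Fix such an $\ell$; decompose $\mu_{1/2^\ell}=\mu_{1/2}\circ\rho$; argue that after $\rho$, a typical set of unset variables has size $\Theta(n/2^\ell)$ by a Chernoff bound. (4) On the restricted cube, partition the $L_i|_\rho$ into ``wide'' (support $\ge k$) and ``narrow'' (support $<k$) with $k$ a suitable $\mathrm{poly}(t,\log(1/\eps))$; by \cref{lem:lo} the wide ones each hit $1$ with probability $\le 1/\sqrt k$, so if \emph{all} surviving forms were wide we'd get $\Pr[P|_\rho=1]\le t/\sqrt k<1-\eps$, contradiction — hence $\ge 1$ narrow form survives, i.e.\ some $L_i$ has $|\supp(L_i)\cap\rho^{-1}(*)|<k$. (5) Since $\rho$ unsets each coordinate of $\supp(L_i)$ independently with probability $2/2^\ell$, a form with $|\supp(L_i)|\ge C k 2^\ell$ would almost surely survive wide; so in fact $L_i$ must have $|\supp(L_i)|\lesssim k2^\ell$ OR be excluded — but then do a counting/pigeonhole over the $\Theta(\log n-\llepsinv)$ scales: a single form $L_i$ of support $s$ can be ``narrow-at-scale-$\ell$'' only for $\ell$ in an interval of length $O(\log k)$ around $\log(s/k)$, so each form serves $O(\log k)=O(\log t)$ scales, forcing $t\cdot O(\log t)\gtrsim (\log n-\llepsinv)\cdot\Omega(\log(1/\eps))$, giving $t=\lb$. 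I expect step (5) — the simultaneous bookkeeping showing each linear form can only be ``responsible'' for $\widetilde O(1)$ of the $\Omega(\log n - \llepsinv)$ scales, and that each scale genuinely needs $\Omega(\log(1/\eps))$ responsible forms — to be the main obstacle; this is exactly where the $1/\mathrm{polylog}$ loss enters and where the argument of Alon–Bar-Noy–Linial–Peleg has to be reworked to handle arbitrary real linear forms rather than $\{0,1\}$-sums, using Littlewood–Offord in place of their combinatorial counting.
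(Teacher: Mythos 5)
Your high-level plan (restriction decomposition $\mu_{1/2^\ell}=\mu_{1/2}\circ\rho$, Littlewood--Offord for wide forms, counting narrow forms across the $\Theta(\log n-\llepsinv)$ scales) matches the paper's strategy, but the proposal has a genuine gap exactly at the point you flag as ``the main obstacle,'' and it is not a bookkeeping issue. Your step (4) only yields \emph{one} narrow form per good scale, and nothing in the proposal supplies the claim that each scale needs $\Omega(\lepsinv)$ of them. The missing ingredient is the Alon--F\"uredi variant of Schwartz--Zippel (\cref{lem:sz}): if after restriction only $r<\log(\nicefrac1{8\eps})$ narrow forms survive, then $Q'(x)=\prod_{\text{narrow}}(1-L(x))$ is a degree-$r$ polynomial that is nonzero at $\bar 0$, hence nonzero on at least a $2^{-r}\ge 8\eps$ fraction of the subcube; conditioned on that event the wide forms each equal $1$ with probability $\le \nicefrac1{2t}$ by \cref{lem:lo}, so $P\neq 1$ with probability $\ge 4\eps$, which (after averaging over good $(\ell,\rho)$ pairs and comparing with $\calD_\eps(0^n)\le\eps$) is the contradiction. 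Your sketch instead says the narrow variables ``can be killed by conditioning'' so that ``$P$ must be identically $0$ there''; that is not justified --- an adversarial setting of those variables can make some factor $1-L_i$ vanish and force $P\equiv 1$ on the subcube, and quantifying how often \emph{all} narrow factors stay nonzero is precisely what \cref{lem:sz} is for and where the $\lepsinv$ per scale comes from.

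Two further points would need repair even granting the above. First, the wide/narrow split must be taken with respect to support \emph{outside} the union of the narrow forms' supports: a form with huge raw support could have almost all of it inside the narrow forms' variables, and after conditioning on those variables its anti-concentration (the $\nicefrac1{\sqrt K}$ bound) disappears. The paper handles this with a greedy peeling argument (\cref{prop:partition}), whose termination within $R=\log(\nicefrac1{8\eps})$ steps is guaranteed only because a weight function $w(L)=1/\log^2(2|\supp(L)|)$ is kept small on average over scales and restrictions (\cref{prop:pot_exp}, via Markov). Second, your step (5) counting --- that a form of support $s$ is narrow only for $\ell$ in an interval of length $O(\log k)$ around $\log(s/k)$ --- is false as stated: once narrow, a form stays narrow at all finer scales; what saves the count is that its \emph{survival} probability decays geometrically once $\ell>\log s$, so only the expected number of scales at which it survives and is narrow is $O(\log k)+O(1)$. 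That expectation-weighted accounting is exactly what the paper's potential function formalizes, so this part is repairable, but as written the deterministic interval claim does not hold.
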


         The rest of this section is devoted to proving \cref{thm: deterministic_polynomial_lb_wrt_hard_dist}. We begin with a
        proof outline in \cref{sec:lboutline} followed by the proof in \cref{sec:lbproof}.

\subsection{Proof outline}\label{sec:lboutline}  
    We would like to show that hyperplane covering polynomial $P$ that
    approximates $\OR_n$ with respect to the distribution $\calD_\eps$ (as in
    \cref{thm: deterministic_polynomial_lb_wrt_hard_dist}) must have large degree. Let $\calL$ denote the set of linear
    forms that appear in $P$, i.e., $\calL := \{ L_i \mid i \in [t]\}$. 
    
    Let us see how $P$ behaves on the
    distribution $\mu_{\nicefrac1{2^{\ell}}}$. 
    or equivalently $\mu_{\half} \circ
    \rho_{\nicefrac1{2^{\ell-1}}}$.  
    Let us see what happens to the linear forms in $\calL$
    when the restriction $\rho\sim \rho_{\nicefrac1{2^{\ell-1}}}$ is first applied. We first consider two extreme cases.
    
    \begin{description}
        \item[Very few linear forms survive:] Suppose all but $\log
            (\nicefrac1{2\eps})$ linear forms trivialize on the restriction $\rho$ (i.e., the
            corresponding linear form $L_i|_{\rho}$ becomes 0). Then,
            $(1-P)|_{\rho}$ is a polynomial of degree at most $\log
            (\nicefrac1{2\eps})$ computing a non-zero function (since $1-P(\bar{0}) = 1$). Hence, by \cref{lem:sz}, it is not equal to 0
            with probability at least $2\eps$. 
            This implies that the polynomial
            $P$ errs with probability at least $\eps$ on the distribution $\mu_{\nicefrac1{2^{\ell}}}$.
        \item[All linear forms that survive have large support:] Suppose all the
            linear forms that survive post restriction $\rho$ have large
            support, say $4t^2$. Then, by the anti-concentration of linear
            forms over reals (\cref{lem:lo}), we have that
            each linear form is 1 with probability at most
            $\nicefrac1{\sqrt{4t^2}} = \nicefrac1{2t}$. Since there
            are at most $t$ linear forms, the
            probability that any of them is 1 is at most
            $\nicefrac{t}{2t}  = \half$. Thus, $P$ errs with probability  $\half$ on the distribution 
            $\mu_{\nicefrac1{2^{\ell}}}$.
    \end{description}

    Note that the actual situation for each distribution
    $\mu_{\nicefrac1{2^{\ell}}}$ will most likely be a combination of the above
    two. We can then show that a combination of the above two arguments
    will still work if the surviving linear forms have the following nice
    structure. Let $\calL_{\rho}$ be the set of surviving linear forms
    subsequent to the restriction $\rho$, i.e., $\calL_\rho : = \{ L_i|_\rho
    \mid i \in[t], L_i|_\rho \neq 0\}$. Suppose $\calL_\rho$ can be
    partitioned into 2 sets $\calL_\rho' \disjunion \calL_\rho''$ such that
    the number of linear forms in $\calL_\rho'$ is small (less than
    $O(\lepsinv)$) and each of the linear forms in $\calL_\rho''$
    have large support even after removing $\union_{L \in
    \calL_\rho'}\supp(L)$ from their support. How does one then show that a constant
    fraction of $\rho$'s satisfy that the corresponding linear forms
    $\calL_\rho$ have this nice structure? For this, we draw inspiration
    from the proof of Alon, Bar-Noy, Linial and
    Peleg~\cite{AlonBLP1991}, where they prove similar bounds for
    hyperplane covering polynomials supported entirely on linear forms
    arising as sums of variables. They construct an appropriate
    potential function that guarantees a similar property in their
    lower-bound argument. 
    
    We use a slightly different potential function $\Phi_\ell(\calL)$, which has the following nice property. If the total number of
    linear forms is $t$, then
    $\E{\ell}{\Phi_\ell(\calL)} = O(t/(\log n - \llepsinv))$ and
    furthermore, whenever $\Phi_\ell(\calL)$ is small then the
    corresponding set $\calL_\ell$ of surviving linear forms post
    restriction $\rho_{\nicefrac1{2^{\ell-1}}}$ can be partitioned as
    indicated above. This shows that for most $\ell$, $P$ errs on
    computing  $\OR_n$ unless $t$ is large. 

\subsection{Proof of \cref{thm: deterministic_polynomial_lb_wrt_hard_dist}}\label{sec:lbproof} 

We now turn to defining the potential function $\Phi_\ell(\calL)$, indicated in
the proof outline.
    
    \begin{definition}[potential function]
    \label[definition]{def: potential function}
The weight of a linear form $L$, denoted by $w(L)$, is defined as follows:
    	    \[w(L) :=  
    	                    \begin{cases}
    	                        0 &\text{if } \supp(L) = \emptyset,\\
    	                        \frac{1}{\log^2{(2|\supp(L)|)}}& \text{otherwise}.
    	                    \end{cases}\]

	    Given a collection $\calL = \{L_1, \ldots , L_t\}$ of linear
            forms and $\ell$ a positive integer, the potential
            function $\Phi_\ell(\calL)$ is defined as follows
    	\[\Phi_\ell(\calL) :=
          \sum_{i =1}^t \E{\rho\sim \rho_{\nicefrac1{2^{\ell-1}}}}{w\cbra{L_i|_{\rho}}}
          , \]
where $\rho_{\nicefrac1{2^{\ell-1}}}$ is a $\zs$-restriction as
defined in \cref{def:zs-restriction}.
	\end{definition}

The potential function $\Phi_\ell(\calL)$ satisfies the following two
properties, given by \cref{prop:pot_exp,prop:partition}

\begin{proposition}
    \label[proposition]{prop:pot_exp}  
    There exists a universal constant $C$ such that the following holds. 
	    Let $\calL=\bra{L_1,\ldots, L_t}$ be any collection of $t$ 
            linear forms, then
    	    \[\E{\ell \in I_\eps}{\Phi_\ell(\calL)} \leq 
              \frac{Ct}{|I_\eps|}\ .\]   
	\end{proposition}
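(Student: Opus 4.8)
\textbf{Proof proposal for \cref{prop:pot_exp}.}

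The plan is to bound the contribution of a single linear form $L_i$ to $\E{\ell \in I_\eps}{\Phi_\ell(\calL)}$, uniformly in $L_i$, and then sum over $i \in [t]$. By definition,
\[
\E{\ell \in I_\eps}{\Phi_\ell(\calL)} = \sum_{i=1}^t \E{\ell \in I_\eps}{\E{\rho \sim \rho_{1/2^{\ell-1}}}{w(L_i|_\rho)}},
\]
so it suffices to show that for \emph{every} linear form $L$ (in particular for each $L_i$), $\E{\ell \in I_\eps}{\E{\rho \sim \rho_{1/2^{\ell-1}}}{w(L|_\rho)}} \leq C/|I_\eps|$ for a universal constant $C$; summing over the $t$ forms then gives the claimed $Ct/|I_\eps|$.

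First I would reduce to understanding the random variable $k_\ell := |\supp(L|_\rho)|$ where $\rho \sim \rho_{1/2^{\ell-1}}$. If $L$ has support of size $s$, then $k_\ell$ is a $\mathrm{Binomial}(s, 1/2^{\ell-1})$ random variable, since each surviving variable stays in the support (it has a nonzero coefficient and real coefficients cannot cancel). We have $w(L|_\rho) \leq 1$ always, and $w(L|_\rho) = 1/\log^2(2 k_\ell) \leq O(1/\ell^2)$ once $k_\ell \geq 2^{\Omega(\ell)}$; more usefully, $w$ is small whenever $k_\ell$ is even moderately large. The key point is that as $\ell$ ranges over $I_\eps = [1, \log n - \llepsinv]$, the ``scale'' $2^{\ell-1}$ at which we thin the support sweeps geometrically, so for all but $O(1)$ many values of $\ell$ the expected surviving support $s/2^{\ell-1}$ is either $\gg 1$ (forcing $w$ small with high probability) or $\ll 1$ (so $L|_\rho$ is identically $0$ with high probability, contributing $w = 0$). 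Concretely, I would split $I_\eps$ at $\ell^* := \lceil \log s \rceil$: for $\ell$ in a window of width $O(\log\log(\cdot))$ around $\ell^*$ bound $w \leq 1$ trivially, contributing $O(\text{window width})/|I_\eps|$; for $\ell$ well above $\ell^*$ use $\Pr[k_\ell \geq 1] \leq s/2^{\ell-1}$, which decays geometrically and sums to $O(1)$ after dividing by $|I_\eps|$; for $\ell$ well below $\ell^*$ use a Chernoff bound to say $k_\ell \geq 2^{(\ell^*-\ell)/2}$ with high probability, hence $w(L|_\rho) = O(1/(\ell^*-\ell)^2)$ in expectation, and $\sum_j 1/j^2 = O(1)$.

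The main obstacle I anticipate is handling the ``transition window'' cleanly: near $\ell \approx \log s$ the surviving support is $\Theta(1)$ in expectation, $w$ is close to its maximum value $1$, and one genuinely does contribute $\Theta(1)$ there — but the definition of $w$ using $\log^2(2|\supp(L)|)$ (rather than, say, just $|\supp(L)|$) is precisely what makes the \emph{width} of the bad window only polylogarithmic, and more importantly what makes the tails on both sides summable to $O(1)$ after the $1/|I_\eps|$ normalization. So the delicate accounting is to verify that the total contribution over \emph{all} $\ell$, for the worst-case $s$, is bounded by an absolute constant; the $1/\log^2$ shaping of the weight, together with the geometric sweep of the restriction parameter, is exactly what delivers this, and the bound is then uniform in $s$ (equivalently, uniform over which linear form $L_i$ we picked). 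Summing the per-form bound $C/|I_\eps|$ over $i \in [t]$ completes the proof.
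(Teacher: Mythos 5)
Your proposal follows essentially the same route as the paper's proof: bound each linear form separately, split the sum over $\ell$ at $\ell^* \approx \log|\supp(L)|$, use $\Pr[|\supp(L|_\rho)| \geq 1] \leq |\supp(L)|/2^{\ell-1}$ (geometric, summing to $O(1)$) above the threshold, and a Chernoff bound giving $w = O(1/(\ell^*-\ell+1)^2)$ plus an exponentially small failure term below it, with $\sum_j 1/j^2$ supplying the absolute constant. One correction: no transition window of width $O(\log\log(\cdot))$ is needed (and bounding $w\leq 1$ trivially on such a window would contribute $O(\log\log(\cdot))/|I_\eps|$ per form, spoiling the universal constant $C$); the two tail bounds already cover every $\ell$, with only the $O(1)$ values of $\ell$ nearest $\ell^*$ contributing $\Theta(1)$ each, exactly as in the paper's two claims.
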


	\begin{proposition}[partition of linear forms]
	\label[proposition]{prop:partition}
		Let $\calL= \{L_1,\ldots ,L_t \}$ be a collection of
                $t$ non-zero linear forms and $K,R$
                be two positive integers such that \[\sum_{i=1}^t
                w(L_i) < \frac{R}{\log^2 (2RK)}\ .\]Then, there
                exists a partition $\calL = \calL' \disjunion
                \calL''$ of the set of linear forms $\calL$ such that 
        \begin{itemize}
            \item $|\calL'| \leq R$, 
            \item  For all $L \in \calL''$, $|\supp(L) \setminus
              \union_{L' \in \calL'} \supp(L')| \geq K$.
        \end{itemize}
\end{proposition}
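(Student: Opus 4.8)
The plan is to build $\mathcal{L}'$ by a \textbf{greedy absorption process}: initialize $\mathcal{L}' \gets \emptyset$, and repeat the following as long as possible — if there is some $L \in \mathcal{L}\setminus\mathcal{L}'$ with $|\supp(L)\setminus\bigcup_{L'\in\mathcal{L}'}\supp(L')| < K$, pick one such $L$ and move it into $\mathcal{L}'$; once no such $L$ remains, stop and set $\mathcal{L}'' := \mathcal{L}\setminus\mathcal{L}'$. With this definition the second bullet of the proposition is immediate: the quantity $|\supp(L)\setminus\bigcup_{L'\in\mathcal{L}'}\supp(L')|$ is monotone non-increasing as $\mathcal{L}'$ grows, so when the process halts, every remaining form, i.e.\ every $L\in\mathcal{L}''$, has effective support at least $K$ with respect to the \emph{final} set $\mathcal{L}'$. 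So the whole content is in showing $|\mathcal{L}'|\le R$.

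For the size bound, let $L_{i_1},L_{i_2},\dots$ be the forms placed into $\mathcal{L}'$ in the order they were chosen, and put $U_j := \bigcup_{k\le j}\supp(L_{i_k})$ (with $U_0=\emptyset$). The key step is an induction on $j$ showing $|\supp(L_{i_j})| < jK$. Indeed, telescoping over the chain $U_0\subseteq U_1\subseteq\cdots$ gives $|U_{j-1}| \le \sum_{k=1}^{j-1}|\supp(L_{i_k})\setminus U_{k-1}| < (j-1)K$, since each summand is $<K$ by the selection rule; then $|\supp(L_{i_j})| \le |U_{j-1}| + |\supp(L_{i_j})\setminus U_{j-1}| < (j-1)K + K = jK$. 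Because the $L_{i_j}$ are non-zero we also have $|\supp(L_{i_j})|\ge 1$, so for every $j\le R$ we get $2 \le 2|\supp(L_{i_j})| < 2RK$, and hence (using that $\log$ is positive and increasing on $[2,\infty)$) $w(L_{i_j}) = 1/\log^2(2|\supp(L_{i_j})|) > 1/\log^2(2RK)$. Now suppose for contradiction that $|\mathcal{L}'|\ge R$; then $L_{i_1},\dots,L_{i_R}$ are $R$ distinct members of $\mathcal{L}$, and summing the weight lower bound yields $\sum_{i=1}^t w(L_i) \ge \sum_{j=1}^R w(L_{i_j}) > R/\log^2(2RK)$, contradicting the hypothesis $\sum_{i=1}^t w(L_i) < R/\log^2(2RK)$. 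Therefore $|\mathcal{L}'| < R$, in particular $|\mathcal{L}'|\le R$, and we are done.

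The only delicate point is the support-size induction $|\supp(L_{i_j})|<jK$: this is precisely what converts ``$\mathcal{L}'$ is large'' into ``each greedily chosen form has non-trivially large weight,'' and therefore into ``the total potential $\sum_i w(L_i)$ is large,'' contradicting the assumed bound. Matching the exact denominator $\log^2(2RK)$ then comes for free from the integrality of the support sizes, and there is in fact slack (the argument gives the strict bound $|\mathcal{L}'|<R$); the non-zero hypothesis on the $L_i$ is used only to ensure $|\supp(L_{i_j})|\ge 1$ so that the weights are strictly positive. I do not expect any genuine obstacle here — the argument is a clean greedy/telescoping computation once the right greedy rule is chosen.
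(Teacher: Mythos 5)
Your proposal is correct and follows essentially the same route as the paper: the same greedy procedure (move a form into $\calL'$ whenever its support outside $\supp(\calL')$ is small), the same telescoping bound showing the $j$-th absorbed form has support less than $jK$, hence weight greater than $1/\log^2(2RK)$, and the same contradiction with the hypothesis to bound $|\calL'|$. Your write-up is simply a more detailed version of the paper's argument (the paper uses the threshold $\le K$ rather than $<K$, which is immaterial).
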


Before proving these two propositions, we first show how they imply
\cref{thm: deterministic_polynomial_lb_wrt_hard_dist}. 

\begin{proof}[Proof of {\cref{thm:
      deterministic_polynomial_lb_wrt_hard_dist}}]
Let 
		\[t = \lbdetail,\]
	where $C$ is the universal constant in \cref{prop:pot_exp}.	
	Note that $t = \lbinline$ (via \cref{clm:binomial}).	
	Let $P = 1 - \prod_{i \in [t]}(1-L_i)$ be any hyperplane covering
polynomial of degree $t$. Recall that $\eps \geq
\nicefrac1{2^{n/2}}$. Recall $\calL=\{ L_1,\ldots, L_t\}$. To prove
\cref{thm: deterministic_polynomial_lb_wrt_hard_dist} it suffices to
show the following	
	\begin{equation}
	\label{eq:goal 1}
		\prob{x \in \calD_\eps}{P(x) \neq \OR_n(x)} > \eps.
	\end{equation}	 
	 
We have from \cref{clm:zero} that $\calD_\eps(0^n) = \Pr_{x\sim
  \calD_\eps}[x = \bar{0}] < \eps$. Since $\prob{}{P(x) \neq 1} \leq
\prob{}{P(x) \neq \OR_n(x)} + \prob{}{x \neq 0^n}$, in order to show
inequality \eqref{eq:goal 1}, it suffices to  prove
 	\begin{equation}
 	\label{eq:goal2}
 		\prob{x \sim \calD_\eps}{P(x) \neq 1} \geq 2 \eps.
 	\end{equation}
\noindent
Since
$\calD_\eps = \frac{1}{|I_\eps|}
    \sum_{\ell \in I_\eps} \mu_{\nicefrac1{2^{\ell}}}$ and
    $\mu_p^{[n]} = \mu_{\half}\circ \rho_{2p}^{[n]}$, this is
    equivalent to showing
\begin{equation}\label{eq:contrad}
\E{\ell \in I_\eps}{\E{\rho \sim \rho_{\nicefrac1{2^{\ell-1}}}}{
    \Pr_{x\sim\mu_{\half}}\left[P|_{\rho}\left(x\right)\neq1
 \right]}} \geq
    2\eps\ .
\end{equation}

To this end, we first apply \cref{prop:pot_exp} to the set $\calL$ of
$t$ linear forms in the polynomial $P$ to obtain that
\[\E{\ell \in I_\eps}{\E{\rho\sim
      \rho_{\nicefrac1{2^{\ell-1}}}}{\sum_{i\in [t]}w(L_i|_\rho)}}=\E{\ell \in I_\eps}{\Phi_\ell(\calL)} \leq 
              \frac{Ct}{|I_\eps|}\ .\]   
Applying Markov's inequality to the above, we have
\[ \Pr_{\ell,\  \rho}\left[ \sum_{i \in [t]} w(L_i|_\rho) \leq
      \frac{2Ct}{|I_\eps|}\right] \geq \frac12\ .
\]
We call an$(\ell,\rho)$ pair \emph{good} if the above event holds,
i.e., $\sum_{i=1}^t w(L_i|_{\rho}) \leq
\nicefrac{2Ct}{|I_\eps|}$. Thus, 
\begin{equation}\label{eq:good}
\Pr_{\ell, \ \rho}[ (\ell,\rho) \text{ is good }] \geq \half\ .
\end{equation}

Now given a good $(\ell, \rho)$-pair, let $\calL_{\rho}$ be the set of surviving linear forms
    subsequent to the restriction $\rho$, i.e., $\calL_\rho : = \{ L_i|_\rho
    \mid i \in[t], L_i|_\rho \neq 0\}$. We thus have $\sum_{L \in \calL_\rho}
    w(L) \leq \nicefrac{2Ct}{|I_\eps|}$. Let $K :=4t^2$ and $R: =
    \log(\nicefrac1{8\eps})$. It can be checked that for this choice of parameters we have
    $\nicefrac{2Ct}{|I_\eps|} < \nicefrac{R}{\log^2 (2RK)}$.
We can now apply \cref{prop:partition} to obtain a partition
$\calL_\rho = \calL_\rho' \disjunion \calL_\rho''$ such that
\begin{itemize}
\item $|\calL_\rho'| \leq R = \log(\nicefrac1{8\eps})$,
\item for all $L \in \calL_\rho''$, we have $|\supp(L) \setminus
  \union_{L' \in \calL_\rho'} \supp(L') | \geq K=4t^2$.
\end{itemize}
Consider the polynomial $P|_\rho = 1 - \prod_{i \in [t]} (1-
L_i|_\rho) = 1- \prod_{L \in \calL|_\rho}(1-L)$ subsequent to the restriction $\rho$. We will rewrite this
polynomial as $P|_\rho = 1 - Q'_\rho \cdot Q''_\rho$ where the
polynomials $Q_\rho'$ and $Q_\rho''$ are defined as follows (using the
sets $\calL_\rho'$ and $\calL_\rho''$ respectively). 
\begin{align*}
Q_\rho'(x) &:= \prod_{L \in \calL_\rho'} (1-L(x)),\\
Q_\rho''(x) &:= \prod_{L \in \calL_\rho''} (1-L(x)).
\end{align*}
Note that $P|_\rho = 1- Q_\rho'\cdot Q_\rho''$. 

Since $|\calL_\rho'| \leq \log(\nicefrac1{8\eps})$, we have that the
degree of $Q_\rho'$ is at most $\log(\nicefrac1{8\eps})$. Furthermore $Q'_\rho(x) \not\equiv 0$ (since $Q'_\rho(\bar{0})=1$). Thus applying
\cref{lem:sz}, we have
\[
\Pr_{x \sim \mu_{\half}}\left[ Q_\rho'(x) \neq 0 \right] \geq 8\eps.
\]
Consider any setting of variables in $\union_{L \in
  \calL_\rho'}\supp(L)$ such that $Q_\rho'(x) \neq
0$. Even conditioned on setting all these variables, we know that each $L \in
\calL_\rho''$ still has surviving support of size at least
$4t^2$. Thus, by \cref{lem:lo}, we have for each $L \in \calL_\rho''$,
\[
\Pr_{x \sim \mu_{\half}}\left[ L(x) = 1 \mid Q_\rho'(x)
  \neq 0\right] \leq \frac{1}{\sqrt{4t^2}} = \frac1{2t}.
\]
By a union bound, we have
\[\Pr_{x \sim \mu_{\half}} \left[ Q''_\rho(x) = 0 \mid Q'_\rho(x) \neq
    0 \right] = \Pr_{x \sim \mu_{\half}} \left[ \exists L \in \calL_\rho'',
      L(x) = 1 \mid Q'_\rho(x) \neq
    0 \right] \leq \frac{t}{2t} = \frac12.
\]
Hence, 
\[\Pr_{x \sim \mu_{\half}} \left[ P|_\rho(x) \neq 1 \right] =
  \Pr\left[Q'_\rho(x) \neq 0 \right] \cdot \Pr \left[ Q''_\rho(x) = 0 \mid Q'_\rho(x) \neq
    0 \right] \geq  8\eps \cdot \frac12 = 4\eps.
\]
Finally averaging over all $(\ell, \rho)$ we have from above and \eqref{eq:good} 
\[
\Pr_{x \sim \calD_\eps} \left[ P(x) \neq 1\right] \geq \Pr_{\ell,
  \rho}\left[ (\ell,\rho) \text{ is good } \right] \cdot \Pr\left[
  P|_\rho(x) \neq 1 \mid  (\ell,\rho) \text{ is good } \right ] \geq
\frac12 \cdot 4\eps = 2\eps.
\]
This proves \eqref{eq:goal2} and thus completes the proof of \cref{thm: deterministic_polynomial_lb_wrt_hard_dist}. 
\end{proof}

We are now left with the proofs of
\cref{prop:pot_exp,prop:partition}. We begin with the
proof of \cref{prop:partition}.

	\begin{proof}[Proof of {\cref{prop:partition}}]
	    Consider the following algorithm to obtain the partition
            $\calL = \calL'\disjunion \calL''$. 
	    \begin{enumerate}
		    \item Initialize $\calL' \gets \emptyset $ and
                      $\calL'' \gets \calL$.
		    \item While there exists an $L \in \calL''$ such
                      that $|\supp(L) \setminus \union_{L' \in \calL'}\supp(L')| \leq
                      K$, 
                      \begin{itemize}
                        \item Move such an $L$ from $\calL''$ to
                          $\calL'$ (i.e., $\calL' \gets \calL' \union
                          \{L\}$ and $\calL'' \gets \calL'' \setminus
                          \{L\}$).
                        \end{itemize}
                      \end{enumerate}
Let $\supp(\calL')$ be the union of supports of all linear forms in $\calL'$ $(i.e., \supp(\calL') = \cup_{L \in \calL'} \supp(L))$. 
When the algorithm terminates, we have $|\supp(L) \setminus
\supp(\calL')| \geq K$ for all $L \in \calL''$. 

We now argue that $|\calL'| \leq R$. Each iteration of the while loop adds a linear form $L$ to $\calL'$
with at most $K$ new variables. If the while loop is performed for $T$ iterations, then the support of each $L$ added to $\calL'$ is
at most $TK$. We now argue that $T < R$. If not, then after exactly $R$ iterations of the while loop, we have that
\[
\sum_{L \in \calL} w(L) \geq \sum_{L \in \calL'} w(L) \geq
\frac{R}{\log^2(2RK)},
\]
contradicting the hypothesis of the proposition. Hence $T < R$. The
size of $\calL'$ is the number of iterations of the while loop and is
thus bounded above by $R$. This completes the proof of the
proposition.
\end{proof}

\begin{proof}[Proof of {\cref{prop:pot_exp}}]
    	\begin{align*}
        	\E{\ell \in I_\eps}{\Phi_\ell(\calL)} 
        	            &=\E{\ell \in I_\eps}{\E{\rho \sim \rho_{\nicefrac1{2^{\ell-1}}}}{\sum_{i\in[t]}w(L_i|_{\rho})}}\\
                        &= \frac1{|I_\eps|} \sum_{i\in[t]}
                          \sum_{\ell\in I_\eps} \E{\rho}{w(L_i|_{\rho})}\\
                        &= \frac{1}{|I_\eps|}\sum_{i\in [t]} \bigg  (\underbrace{\sum_{\ell >  \log{|\supp(L_i)|}} \E{\rho}{w(L_i|_{\rho})}}_{T_1}
                        + \underbrace{\sum_{\ell \leq \log{|\supp(L_i)|}} \E{\rho}{w(L_i|_{\rho})}}_{T_2}\bigg).
	    \end{align*}
    $T_1$ and $T_2$ are bounded using \cref{claim:logbinomial1} and
    \cref{claim:logbinomial2} respectively. Hence,
        \begin{align*}
    	    \E{\ell \in I_\eps}{\Phi_\ell(\calL)} &\leq \frac{1}{|I_\eps|} \sum_{i = 1}^{t} \left(2+ \frac{\pi^2}{6} + \frac{e}{e-1}\right)
    	            \leq \frac{t}{|I_\eps|}\cdot \cbra{4 + \frac{\pi^2}{6}}.\qedhere
        \end{align*}
    \end{proof}
    
    \begin{claim}
    \label[claim]{claim:logbinomial1}
        Let $L$ be a linear form such that $|\supp(L)| = k$. Then
            \[\sum_{\ell > \log{k}} \E{\rho \sim \rho_{\nicefrac1{2^{\ell-1}}}}{w(L|_{\rho})} \leq 2.\]
    \end{claim}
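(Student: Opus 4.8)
Since $w(L|_\rho)$ depends on the restriction $\rho$ only through $j := |\supp(L|_\rho)|$, and under $\rho\sim\rho_{p}$ each of the $k := |\supp(L)|$ variables of $L$ survives independently with probability $p$, the random variable $j$ is distributed as $\mathrm{Bin}(k,p)$. Writing $w_0 := 0$, $w_j := 1/\log^2(2j)$ for $j\ge 1$, and $p_\ell := 2^{-(\ell-1)}$, the claim asks us to show $\sum_{\ell>\log k} a_\ell \le 2$, where $a_\ell := \E{j\sim\mathrm{Bin}(k,p_\ell)}{w_j} = \sum_{j\ge 1}\binom{k}{j} p_\ell^{\,j}(1-p_\ell)^{k-j}\,w_j$.

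The plan is to single out the $j=1$ term. First I would record the elementary bounds $w_1 = 1/\log^2 2 = 1$ and $w_j \le 1/\log^2 4 = 1/4$ for every $j\ge 2$, which give $a_\ell \le \Pr[\mathrm{Bin}(k,p_\ell)=1] + \frac14\Pr[\mathrm{Bin}(k,p_\ell)\ge 2] = \frac34\,kp_\ell(1-p_\ell)^{k-1} + \frac14\bigl(1-(1-p_\ell)^k\bigr)$. Second I would observe that for $\ell > \log k$ one has $p_\ell < 2/k$, so the numbers $\{p_\ell : \ell > \log k\}$ form a strictly decreasing geometric sequence of ratio $1/2$ all lying below $2/k$ (in particular $kp_\ell < 2$ throughout). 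It then suffices to bound the two sums $S_1 := \sum_{\ell > \log k} kp_\ell(1-p_\ell)^{k-1}$ and $E := \sum_{\ell > \log k}\bigl(1-(1-p_\ell)^k\bigr)$ by absolute constants, since $\sum_{\ell>\log k} a_\ell \le \frac34 S_1 + \frac14 E$.

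For each of these sums I would compare the geometric sum with an integral, using that $p\mapsto k(1-p)^{k-1}$ and $p\mapsto \frac{1-(1-p)^k}{p} = \sum_{m=0}^{k-1}(1-p)^m$ are decreasing: writing $q$ for the largest $p_\ell$ with $\ell > \log k$ (so $q < 2/k$), the intervals $[p_{\ell}/2,\,p_\ell]$ over $\ell > \log k$ tile $(0,q]$, and decreasingness of $f(p)/p$ on each such interval gives $f(p_\ell) \le 2\int_{p_\ell/2}^{p_\ell} \frac{f(p)}{p}\,dp$; summing yields $S_1 \le 2\int_0^q k(1-p)^{k-1}\,dp = 2\bigl(1-(1-q)^k\bigr) \le 2$ and, likewise, $E \le 2\int_0^q \sum_{m=0}^{k-1}(1-p)^m\,dp$, a bounded explicit quantity (the integrand is $\le \min\{k,1/p\}$, so the integral is $\le 1 + \ln(qk) \le 1 + \ln 2$). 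Plugging these back in bounds $\sum_{\ell>\log k} a_\ell$ by an absolute constant. The only genuinely delicate point, which I expect to be the main obstacle, is pushing this constant all the way down to the stated value $2$: the crude bound $w_j\le 1$ (i.e.\ $a_\ell \le 1-(1-p_\ell)^k$, hence $\sum_\ell a_\ell \le E$) already overshoots $2$, so one must keep the $j=1$ term with its full factor $(1-p_\ell)^{k-1}$ — without it $\sum_\ell kp_\ell$ is too large — and must also exploit $w_j\le 1/4$ for $j\ge 2$ precisely on the $O(1)$ ``boundary'' indices $\ell$ just above $\log k$, where $\mathrm{Bin}(k,p_\ell)$ carries non-negligible mass on $j\ge 2$. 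For the way this claim is used in \cref{prop:pot_exp}, any absolute constant in place of $2$ would serve equally well.
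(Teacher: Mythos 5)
Your premise for the extra machinery is mistaken, and this matters because the paper's proof is precisely the ``crude'' route you dismiss. Bound $w(L|_\rho)\le 1$ whenever the restricted support is nonempty; then $\mathbb{E}_\rho[w(L|_\rho)]\le \Pr[|\supp(L|_\rho)|\ge 1]=1-(1-2^{-(\ell-1)})^k\le k\,2^{-(\ell-1)}$, and over $\ell>\log k$ these upper bounds form a geometric series with ratio $\tfrac12$ whose largest term is at most $1$ (since $2^{\ell-1}\ge k$ for the first relevant $\ell$, under the paper's implicit indexing), so the total is at most $2$. Your claim that ``$\sum_\ell a_\ell\le E$ already overshoots $2$'' is an artifact of your lossy integral-comparison estimate $E\le 2(1+\ln 2)$: bounding each term of $E$ directly by $kp_\ell$ gives $E\le 2$ with no further work. (If one insists on worrying about $\log k$ not being an integer, the first term can be as large as $2$ and both your argument and the paper's degrade by a bounded factor; the paper glosses over this, and indeed only some absolute constant is needed for \cref{prop:pot_exp}.)

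As a consequence, your proposal is both more elaborate than necessary and, as written, does not prove the claim as stated: splitting off $j=1$ with $w_1=1$, $w_j\le\tfrac14$ for $j\ge 2$, and the integral comparison yields at best $\tfrac34 S_1+\tfrac14 E\le \tfrac32+\tfrac12(1+\ln 2)\approx 2.35$, a shortfall you acknowledge and defer to an unexecuted refinement on the ``boundary'' indices. The steps you do carry out (the $\mathrm{Bin}(k,p_\ell)$ reformulation, monotone comparison of the geometric sum with an integral) are sound, so your argument is a valid proof that the sum is $O(1)$, which is all the application uses; but the missing ingredient for the stated bound of $2$ is simply the observation that $1-(1-p_\ell)^k\le kp_\ell$ already makes the one-line estimate sum to $2$, rendering the $j=1$ versus $j\ge 2$ split unnecessary.
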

    \begin{proof}\abovedisplayskip=-15pt 
    \begin{align*}
			\sum_{\ell > \log{k}} \E{\rho \sim \rho_{\nicefrac1{2^{\ell-1}}}}{w(L|_{\rho})}
          	&\leq \sum_{\ell >  \log{k}}
            			\cbra{\prob{\rho}{|\supp(L|_{\rho})| = 0} \cdot 0 +
            			\prob{\rho}{|\supp(L|_{\rho})| \geq 1} \cdot 1}\\
         	&\leq \sum_{\ell > \log{k}} \cbra{1 - \cbra{1 - \frac{1}{2^{\ell-1}}}^{k}}\\ 
         	&\leq \sum_{\ell > \log{k}}\frac{k}{2^{\ell-1}}
            \qquad\qquad \qquad  [\because (1-x)^n \geq 1-nx, \forall \ 0< x \leq 1 ] 
      \\
			&\leq \sum_{i \geq 0} \frac{1}{2^i}  = 2 .         
			\end{align*}
    \end{proof}
    
    \begin{claim}
    \label[claim]{claim:logbinomial2}
        Let $L$ be a linear form such that $|\supp(L)| = k $. Then
            \[\sum_{\ell = 0}^{\log k}\E{\rho \sim \rho_{\nicefrac1{2^{\ell-1}}}}{w(L|_{\rho})} \leq \frac{\pi^2}{6} + \frac{e}{e-1}.\]
    \end{claim}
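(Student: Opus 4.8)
The plan is to bound the summand for each $\ell$ on its own and then sum a convergent series that adds up to $\pi^2/6 + e/(e-1)$. Fix a linear form $L$ with $|\supp(L)| = k$ and an integer $\ell$ with $0 \le \ell \le \log k$. Under $\rho \sim \rho_{\nicefrac1{2^{\ell-1}}}$ the surviving support $X := |\supp(L|_\rho)|$ is a sum of $k$ independent Bernoulli variables, each surviving with probability $p_\ell := \min\{1/2^{\ell-1},1\}$, so $m_\ell := \E{\rho}{X} = k\,p_\ell$; the hypothesis $\ell \le \log k$ is precisely what forces $m_\ell \ge 2$. I would reindex by $j := \lfloor\log k\rfloor - \ell$, so that $j$ ranges over $\{0,1,\dots,\lfloor\log k\rfloor\}$ and, for the values of $\ell$ with $p_\ell < 1$, one has $2^{j+1} \le m_\ell < 2^{j+2}$. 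It then suffices to show that the $j$-th summand is at most $1$ when $j = 0$ and at most $1/j^2 + e^{-j}$ when $j \ge 1$.

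The heart of the matter is the case $j \ge 1$ (so $p_\ell < 1$): I would split the expectation according to whether $X$ is close to its mean. On the \emph{typical} event $\{X \ge m_\ell/4\}$ we have $X \ge 2^{j-1}$, hence $\log^2(2|\supp(L|_\rho)|) \ge \log^2(2^{j}) = j^2$, and so $w(L|_\rho) \le 1/j^2$ there. On the \emph{atypical} event $\{X < m_\ell/4\}$ I would use only the trivial bound $w(L|_\rho) \le 1$ (valid since $w(L|_\rho)$ is $0$ when $\supp(L|_\rho) = \emptyset$ and at most $1/\log^2 2 = 1$ otherwise) together with the multiplicative Chernoff bound for the lower tail, $\Pr[X < m_\ell/4] \le \exp(-\tfrac{9}{32}m_\ell) \le \exp(-9\cdot 2^{j-4})$; since $9\cdot 2^{j-4} = \tfrac{9}{16}2^{j} \ge 2^{j-1} \ge j$ for all $j \ge 1$, this probability is at most $e^{-j}$. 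Together these give a summand of at most $1/j^2 + e^{-j}$. For $j = 0$ (i.e.\ $\ell = \lfloor\log k\rfloor$, where $m_\ell \in [2,4)$) I would just bound the summand by $\Pr[X \ge 1] \le 1$, and for the (at most two) values $\ell \in \{0,1\}$ for which $p_\ell \ge 1$ and $\rho$ fixes nothing, the summand equals $w(L) = 1/\log^2(2k) \le 1/\lfloor\log k\rfloor^2$, which already lies within the claimed bound.

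Summing over all $\ell$ and extending the resulting series to infinity then yields
\[
\sum_{\ell=0}^{\log k}\E{\rho\sim\rho_{\nicefrac1{2^{\ell-1}}}}{w(L|_\rho)}
\;\le\; 1 + \sum_{j\ge 1}\Bigl(\tfrac{1}{j^2} + e^{-j}\Bigr)
\;=\; 1 + \frac{\pi^2}{6} + \frac{1}{e-1}
\;=\; \frac{\pi^2}{6} + \frac{e}{e-1},
\]
using $\sum_{j\ge1}1/j^2 = \pi^2/6$ and $\sum_{j\ge1}e^{-j} = 1/(e-1)$; this is exactly the asserted inequality.

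I expect the only genuine care to be in making the constants line up. One has to pick the deviation threshold $m_\ell/4$ rather than the more natural $m_\ell/2$ (with $m_\ell/2$ the Chernoff exponent is only of order $2^{j-2}$, too small to dominate $j$ for $j \in \{1,2,3\}$), and one has to handle by hand both the smallest scale $j = 0$ — where the estimate $1/j^2$ is vacuous and $X$ is only barely supported away from $0$ — and the degenerate terms $\ell \in \{0,1\}$ where the $\zs$-restriction sets nothing to $0$. Everything else is routine: the Bernoulli/Chernoff lower-tail inequality, the elementary inequality $2^{j-1} \ge j$, and the two series evaluations above.
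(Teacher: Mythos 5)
Your proposal is correct and follows essentially the same route as the paper's proof: for each scale you split the expectation according to whether the surviving support is at least a constant fraction of its mean (giving the $1/j^2$-type term) versus a Chernoff lower-tail event (giving the exponentially small term), and then sum the two convergent series to get $\pi^2/6 + e/(e-1)$. The only differences are cosmetic — your threshold $m_\ell/4$ instead of $m_\ell/2$, bounding the tail by $e^{-j}$ rather than $\exp(-2^{j-1})$, and an explicit (and welcome) treatment of the $j=0$ and $\ell\in\{0,1\}$ edge cases that the paper glosses over.
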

    \begin{proof}\abovedisplayskip=-15pt
        \begin{align*}
            \E{\rho}{w(L|_{\rho})} 
            	&\leq \prob{\rho}{|\supp(L|_{\rho})| \geq \frac{k}{2^\ell}} \frac{1}{\log^2{(2k/2^\ell)}} + \prob{\rho}{ |\supp(L|_{\rho})| \leq  \frac{k}{2 \cdot 2^{\ell-1}} }\\
               	&\leq \frac{1}{(\log k - \ell+1)^2} + \exp\cbra{-
                  \frac{k}{4\cdot 2^{\ell-1}}}  \qquad\qquad [\text{By Chernoff bound}]\\
        	\sum_{\ell = 0}^{\log k}\E{\rho \sim \calR_\ell}{w(L|_{\rho})} 
            	&\leq \sum_{\ell = 0}^{\log k} \frac{1}{(\log{k} - \ell+1)^2} +  \sum_{\ell = 0}^{\log k} \exp\cbra{-\frac{k}{2^{\ell+1}}}\\
            	&= \sum_{i=1}^{\log k } \frac{1}{i^2} + \sum_{i = 1}^{\log k} \exp\cbra{-2^{i-1}} \\
                &\leq \frac{\pi^2}{6} +  \frac{e}{e-1}\ .
        \end{align*}
    \end{proof}
\section*{Acknowledgments}
The authors thank Noga Alon for referring them to the paper on radio-broadcast~\cite{AlonBLP1991}.

{\small
\bibliographystyle{prahladhurl}
\bibliography{BHMS-bib}
}

\end{document}